\DeclarePairedDelimiter\abs{\lvert}{\rvert}%
\begin{document}
\title{The Role of Heterogeneity in Autonomous Perimeter Defense Problems\thanks{Supported by the Army Research Laboratory as part of the Distributed and Collaborative Intelligent Systems and Technology (DCIST) Collaborative Research Alliance (CRA).}}
%
%
\author{ { Aviv Adler\inst{1}\orcidID{0000-0003-3698-7639} \and
Oscar Mickelin \inst{2}\orcidID{0000-0003-0167-1992} \and Ragesh K. Ramachandran \inst{3}\orcidID{0000-0003-4396-0127} \and
Gaurav S. Sukhatme\inst{3,5}\orcidID{0000-0003-2408-474\text{X}} \and Sertac Karaman \inst{3}\orcidID{0000-0002-2225-7275} }}
\authorrunning{A. Adler et al.}
%
\institute{Department of Electrical Engineering and Computer Science, Massachusetts Institute of Technology, Cambridge, MA, USA \email{adlera@mit.edu} \and
Program in Applied and Computational Mathematics, Princeton University, Princeton NJ 08544, USA
\email{hm6655@princeton.edu} \and
Department of Computer Science, University of Southern California, Los Angeles, CA 90089, USA.
\email{\{rageshku,gaurav\}@usc.edu} \and Department of Aeronautics and Astronautics, Massachusetts Institute of Technology, Cambridge, MA, USA, 02139 \email{sertac@mit.edu}\and G.S. Sukhatme holds concurrent appointments as a Professor at USC and as an Amazon Scholar. This paper describes work performed at USC and is not associated with Amazon.}
\maketitle              
\begin{abstract}
When is heterogeneity in the composition of an autonomous robotic team beneficial and when is it detrimental? We investigate and answer this question in the context of a minimally viable model that examines the role of heterogeneous speeds in perimeter defense problems, where defenders share a total allocated speed budget. We consider two distinct problem settings and develop strategies based on dynamic programming and on local interaction rules. We present a theoretical analysis of both approaches and our results are extensively validated using simulations. Interestingly, our results demonstrate that the viability of heterogeneous teams depends on the amount of information available to the defenders. Moreover, our results suggest a universality property: across a wide range of problem parameters the optimal ratio of the speeds of the defenders
remains nearly constant.

\keywords{Perimeter defense  \and Heterogeneous multi-robot team \and Dynamic Programming.}
\end{abstract}
\section{Introduction}
An increasingly 
important task, where a robotic system can be employed, is in defending an area against external agents, which pose varying levels of threat. Examples include defending airports against intruding and flight-grounding drones \cite{lykou2020defending}, defending wildlife habitats against trespassing poachers \cite{casey2014drones}, extinguishing and preventing the spread of devastating wildfires caused by human or natural activity \cite{fire}, as well as military applications \cite{military}. 

In general, solutions to perimeter defense problems allude to finding strategies for a set of agents restricted to the perimeter of an area, entrusted with defending the area from intruders which are trying to breach the perimeter of the area \cite{Shishika_2020_review}. 

Compared to a homogeneous team of robots, a team of robots with varying capabilities (heterogeneous team) comes with its unique set of advantages and challenges. Equipping different agents with different capabilities can lead to synergy effects where the heterogeneous system outperforms the alternative homogeneous system composed of identical agents. As a result, in the last decade, there has been significant interest in the robotics community to define, explore, and quantify heterogeneity in different robot applications \cite{Twu2014,Maria2018,Ramachandran2022,Sid2022RAL,Harish2020,ramachandran2019}.




This paper investigates the impact of heterogeneity in multi-robot teams for the perimeter defense problem. We propose two optimal strategies, valid under different assumptions. The first strategy is based on dynamic programming (DP) \cite{cormen01introduction}. It is optimal when the defenders are able to predict the location of the incoming attacks, but suffers from the curse of dimensionality and therefore relatively high associated computational costs. The second strategy is based on local interaction rules, and is optimal when the defenders have no information about the incoming attacks. This strategy can be efficiently computed in an online fashion, but does not implement any prior knowledge of the attack locations.

We prove the optimality of both strategies and analyze their time complexities. The algorithms are extensively validated on simulations. Our numerical experiments are two-dimensional, but the majority of the theoretical results remain valid for any dimension. This includes three-dimensional perimeters in applications involving drones, and higher-dimensional perimeters arising as constraint sets in a state space of arbitrary dimension.

Our results show that heterogeneity is beneficial in the case where the defenders have access to information about the incoming attacks, and is detrimental when the defenders have no information about the attacks. Moreover, we show the universality property that the optimal ratio of the speeds of the defenders remains nearly constant for a two defender case setting. 







\textbf{Related work:} Perimeter defense problems are a variant of pursuit-evasion problems which have been studied extensively in literature. The seminal work of Issacs delineates differential-game approaches to arrive at equilibrium strategies for one pursuer one evader games \cite{isaacs1965differential}. There has been considerable effort by researchers from various communities for solving various variants of pursuit-evasion games involving multiple pursuers and evaders \cite{Moll2020,Rui2019,Fuchs2010}. These papers contain works that view pursuit-evasion games either from the pursuers' side, from the evaders' side, or both. The curse of dimensionality poses a considerable challenge in solving problems involving multiple pursuers and evaders. The perimeter defense problem presented in this paper is a variant of the \textit{target guarding problem} first introduced by Isaacs~\cite{isaacs1965differential}. In the target guarding problem setting an agent is tasked with guarding an region of interest against an adversarial agent. Investigations on perimeter defense problems are in their nascent stage. The review paper by Shishika and Kumar \cite{Shishika_2020_review} delineates the recent works done on multi-robot perimeter defense problems \cite{Shishika2018,Lee2020,Shishika2020RAL,Shishika2019}. Unlike the problems considered in these works, we consider a class of perimeter defense problems in which the number of attackers is much larger than the number of defenders. 





The remainder of the paper is organized as follows. \Cref{sec::problem_statement} contains our notation together with the problem statement. \Cref{sec::theoretical_results,sec::unit-horizon} detail our theoretical results in the infinite and unit-time horizon cases respectively. \Cref{sec::simulation_results} concludes with simulation results.

\begin{table}[!t]
	\centering
	\caption{Notations}
	\label{tab:notation}
	\begin{tabular}{@{}ll@{}}
		\toprule
		Symbol         & Description                    \\ \midrule
		$\cX$  & Perimeter               \\
		$m$          & Number of defenders               \\
		$n$          & Number of attacks               \\
		$x_i$ & Location of defender $i$ \\
		$v_i$ & Speed of defender $i$, ordered decreasingly \\
		$z_j$ & Location of attack $j$ \\
		$t_j$ & Time of attack $j$, ordered increasingly \\
        $h$   & Defender horizon \\
        $\opt(\bv, \{(z_j, t_j)\}_{j=1}^n)$ & Minimum number of attacks the defenders can let through \\
		\bottomrule
	\end{tabular}
	\vspace{-0.3cm}
\end{table}





\section{Problem statement}\label{sec::problem_statement}
In this paper, bold letters are used to represent vectors and non-bold letters to represent scalars. Calligraphic letters are used to represent sets, and $\abs{\mathcal{S}}$ denotes the cardinality of a set $\mathcal{S}$.

For any positive integer $n \in \mathbb{Z}^+$, $[n]$ denotes the set $\{1,2, \cdots, n\}$. For a domain $\cX$ with $x_1, x_2 \in \cX$, $\text{dist}(x_1,x_2)$ denotes the length of the shortest path between $x_1$ and $x_2$ contained inside $\cX$. As an example, in the case when $\cX$ denotes a circle of radius $\frac{1}{2\pi}$
\begin{equation}\label{eq::circle}
    \text{dist}(x_1,x_2) = \frac{1}{2\pi} \text{min}\left( \abs{\theta_1 - \theta_2}, 2\pi -\abs{\theta_1 - \theta_2} \right),
\end{equation}
where $\theta_1, \theta_2$ are the polar angles of $x_1$ and $x_2$, respectively.


\subsection{Perimeter defense against point attacks}

{For ease of reference, the notation of this section is summarized in Table~\ref{tab:notation}.} Our problem is perimeter defense against point attacks with mobile defenders of varying speeds. Specifically, we have a perimeter $\cX$ in $d$-dimensional space, with a distance metric $\dist$, defended by $m$ mobile defenders with speeds $v_1, \dots, v_m$, so that defender $i$ at $x \in \cX$ at time $t$ can make it to $x'$ at time $t'$ iff
    \begin{align}
        \dist(x,x') \leq (t' - t) v_i
    \end{align}
Without loss of generality we order the defenders from fastest to slowest, i.e. $v_1 \geq \dots \geq v_m$, and we denote the \emph{speed vector} as $\bv = (v_1, \dots, v_m)$. Then $n$ attacks $(z_j,t_j) \in \cX \times \bbR_{\geq 0}$, where $z_j$ is the location on $\cX$ at which it happens, and $t_j$ is the time; WLOG we order these by time, i.e. $t_1 \leq \dots \leq t_n$. Because attacks happen at fixed locations and times, they cannot react to the positions of the defenders.


\begin{figure}
    \centering
    \includegraphics[trim={0cm 0.2cm 0cm 0.25cm},clip, width=0.9\textwidth]{./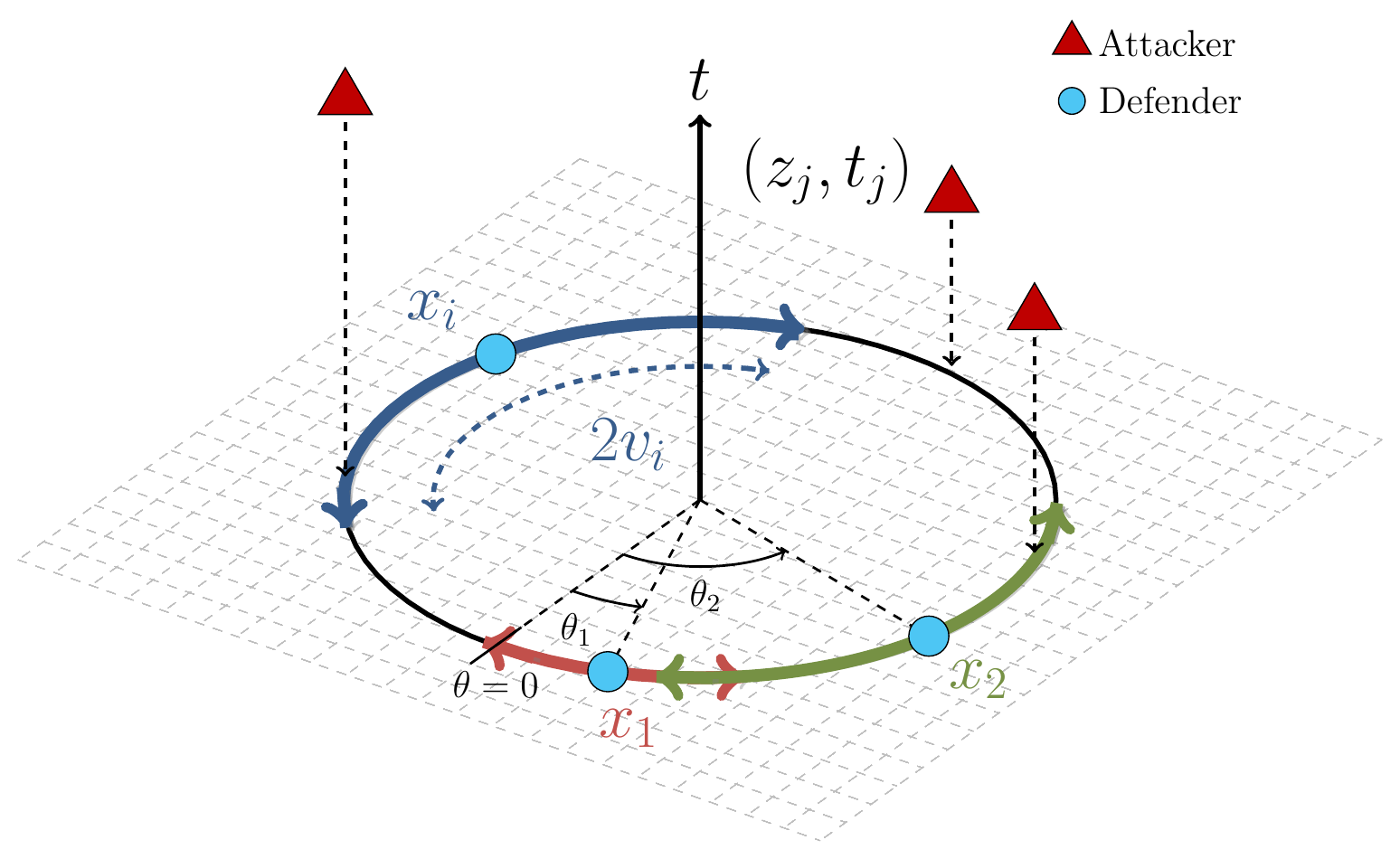}
    \caption{Three defenders facing three attacks, with the unit-time reachable sets for each defender shown. Note that the third dimension is time; if the attack represents a physical object it is approaching from somewhere outside the circle, but we are only concerned with where and when it will hit the perimeter. In this example the defenders are not allowed to leave the perimeter, so the size of the reachable set scales linearly with speed (until it covers the whole perimeter).}
    \label{fig:problem}
\end{figure}

An attack $(z_j, t_j)$ is \emph{thwarted} if and only if a defender is present, i.e. there is some defender $i$ at $z_j$ at time $t_j$; otherwise, we say that the attack \emph{breaches} the perimeter. The goal is to design a policy for the defenders that minimizes the number of attacks that breach the defenses, and to study the effectiveness of different defender speed combinations against attacks. 

Additionally, the team of defenders has a \emph{horizon} $h$ under which they can see attacks: specifically, at time $t$, any attack $(z_j, t_j)$ is known to the defenders if and only if $t_j \leq t + h$. We will study in particular the unit horizon $h = 1$ and the infinite horizon $h = \infty$ (all attacks are visible from the start).

Finally, the defenders are allowed to start at $t = 0$ at any combination of locations in $\cX$; they are even allowed to choose their starting locations based on the attack sequence (up to horizon $h$).

Given a speed vector $\bv$ and sequence of attacks $\{(z_j, t_j)\}_{j=1}^n$, we define $\opt(\bv, \{(z_j, t_j)\}_{j=1}^n)$ as the minimum number of attacks from $\{(z_j, t_j)\}_{j=1}^n$ that defenders of speed $\bv$ can let through (with all attacks known). In some cases we will be dealing with $\opt(\bv,\{(z_j, t_j)\}_{j=1}^n)$ for one sequence of attacks $\{(z_j, t_j)\}_{j=1}^n$ over many defender speed vectors $\bv$; in that case we write $\opt(\bv)$ for convenience.   

\subsection{Different settings}\label{subsec::settings}

Within the above problem description, there are several different variations, mostly to do with how the attacks are generated and the length of the horizon $h$. We roughly divide attack sequences into two settings:
\begin{enumerate}
    \item Any sequence of attacks $(z_j, t_j)$ is legitimate.
    \item Attacks must come at unit time intervals, i.e. $t_j = j$ for all $j \in [n]$.
\end{enumerate}
Note that in setting 2 we do not lose any generality by having the attacks happen at unit time intervals, since we can rescale the time units (and adjust the speeds of the defenders accordingly). Since the index $j$ is superfluous in setting 2 we refer to the sequence of attacks as $z_1, z_2, \dots, z_n$, indexed by $t$.

In setting 1, we study the case where all attacks are known to the defenders at the start; our primary problems are (i) find an algorithm for the defenders' movements that minimizes the number of breaches, and (ii) study the behavior of optimal defense against uniformly-random attacks (in both location and time) for different combinations of defenders. Since setting 1 is more general, the algorithms will also apply to setting 2.

In setting 2, we study the case where the attacks are (i) generated uniformly at random in location (time is fixed) and (ii) generated by an adversary which wants to guarantee a breach with as few attacks as possible. We also consider both the case where all the attacks are known to the defenders at the start ($h = \infty$) and the case where attack $t$ only becomes known at time $t-1$ ($h = 1$).






\begin{remark}
Here we deal with the case where the number of defenders is fixed, and the question is how fast to make each defender (and in particular whether to make them all the same speed or not). The alternative case of varying the number of defenders is investigated in the Appendix, especially in regards to the tradeoff between fewer and faster defenders versus more and slower ones.
\end{remark}

\section{Infinite Horizon Theoretical Results}\label{sec::theoretical_results}

\subsection{Dynamic programming with infinite horizon}

\begin{figure}
    \centering
    \includegraphics[trim={0cm 0.35cm 0cm 0.2cm},clip,width=\textwidth]{./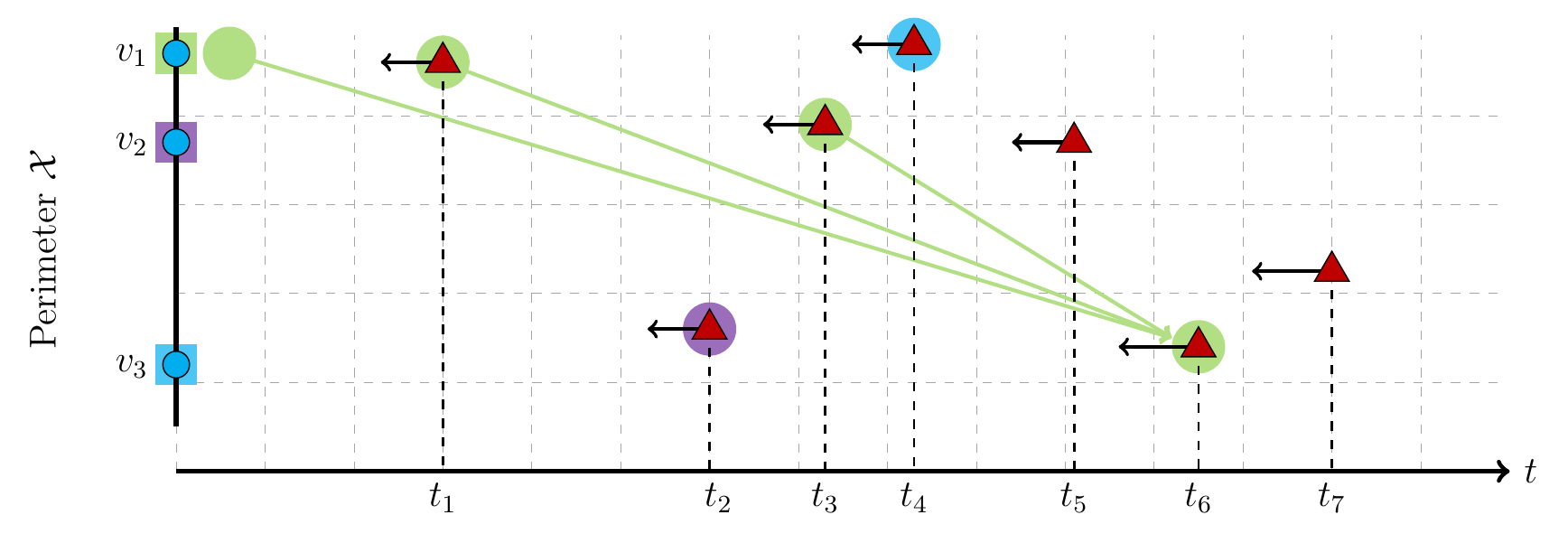}
    \caption{Computing $f(6, 2, 4)$ (defender $1$ has to thwart attack $6$, etc.) recursively; each defender is allowed to thwart attacks prior to these, but not afterwards. Since $6$ is the maximum value, we consider the last attack that defender $1$ can handle before $6$: based on its speed, it can be $0$ (defend nothing before $6$), $1$, or $3$. Thus $f(6, 2, 4) = \min(f(0,2,4), f(1,2,4), f(3,2,4)) - 1$.}
    \label{fig:dynamic}
\end{figure}

We now give an algorithm which, given defender speeds $\bv = (v_1, \dots, v_m)$ and attacks $\{(z_j,t_j)\}_{j=1}^n$ returns two things: (i) $\opt(\bv, \{(z_j,t_j)\}_{j=1}^n)$ (the minimum number of attacks that can be let through); and (ii) the list (of lists) $\boldsymbol{\ell} = (\ell_1, \dots, \ell_m)$, where $\ell_i$ is the (sub)sequence of attacks which defender $i$ should thwart. We refer to $\boldsymbol{\ell}$ as a \emph{defense plan}.

Recall that by default the attacks are sorted in order of arrival time (or the user should sort them before applying the algorithm). 

The pseudocode is given in Alg.~\ref{alg:two}, in which we use the following notation: $\bj = (j_1, \dots, j_m) \in \{0,1,\dots,n\}^m$ denotes a vector of attacks assigned to each defender (with $j_i = 0$ indicating no attack assigned to defender $i$, and we allow the $j_i$'s to be non-distinct even though it is redundant); 
\begin{align}
    \bj_{-i}(j') = (j_1, \dots, j_{i-1}, j', j_{i+1}, \dots, j_m)
\end{align}
i.e. $\bj$ with the $i$th entry replaced by $j'$;
\begin{align}
    \bone_{i}(j', j'') := \begin{cases} 1 &\dist(z_{j'},z_{j''}) \leq (t_{j''}-t_{j'}) v_i \\ 0 &\text{otherwise} \end{cases}
\end{align}
is the indicator that defender $j$ is capable of thwarting attack $j''$ after thwarting $j'$ (and $\bone_{i}(0, j'') = 1$ since defenders can start anywhere); $[\cdot] + [\cdot]$ denotes concatenation (of lists); and $\argmin$ ($\argmax$) denote the sets of values minimizing (maximizing) the arguments. The for-loop in Alg.~\ref{alg:two} iterates in lexicographic order, skipping $f(0,\dots,0)$ (whose value is already known) so that the recursion can work.

The proof of the following result is contained in the Appendix:
\begin{theorem} \label{thm::alg:two-works}
Alg.~\ref{alg:two} outputs the correct value of $\opt(\bv, \{(z_j,t_j)\}_{j=1}^n)$ and $\boldsymbol{\ell}$.
\end{theorem}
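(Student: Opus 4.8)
The plan is to prove, by induction along the lexicographic ordering of states used by the for-loop, that for every $\bj\in\{0,\dots,n\}^m$ the algorithm stores in $f(\bj)$ the quantity
\[
  F(\bj)\;:=\;\min\Bigl\{\, n-\bigl|S_1\cup\dots\cup S_m\bigr|\ :\ S_i\text{ a valid thwart-chain for defender }i,\ \max S_i=j_i\ \forall\,i\,\Bigr\},
\]
where a \emph{valid thwart-chain} for defender $i$ is a set $S_i=\{s_1<\dots<s_k\}\subseteq[n]$ with $\bone_i(s_\ell,s_{\ell+1})=1$ for all $\ell$, and $S_i=\emptyset\iff j_i=0$ (with $\max\emptyset:=0$); and that the stored list $\boldsymbol\ell$ attains this minimum. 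Since a defender may start anywhere and is otherwise constrained only by $\bone_i$ between consecutive thwarts, such chains are exactly the attack-sets a defender can realize, so $F(\bj)$ is the minimum number of breaches over defense plans in which defender $i$'s last thwart is attack $j_i$. Granting this, the theorem follows: the algorithm reports $\min_{\bj}f(\bj)$ together with the matching plan; on the one hand each $F(\bj)$ is the optimum of a restricted instance, so $F(\bj)\ge\opt(\bv,\cdot)$; on the other hand, taking an optimal defense plan with the fewest total thwarts---by the triangle inequality for $\dist$ an attack thwarted twice can be dropped from one chain while keeping validity, so in this plan no attack is thwarted by two defenders and the last thwarts $j_i=\max S_i$ are distinct---yields a state $\bj^\star$ for which this very plan is feasible, hence $F(\bj^\star)=\opt(\bv,\cdot)$.

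The base case $f(\mathbf 0)=n=F(\mathbf 0)$ is set directly. For $\bj\ne\mathbf 0$ put $N=\max_i j_i$ and let $i^\star$ be the defender the algorithm peels (an $\argmax$). The fact driving the recursion is that in any chain family feasible for $\bj$, attack $N$ is thwarted exactly by the defenders with $j_i=N$, since no chain contains an index above its own maximum and $N$ is globally maximal. In the generic case where $i^\star$ is the unique such defender I would prove
\[
  F(\bj)\;=\;\min_{\,j'<N,\ \bone_{i^\star}(j',N)=1}F\bigl(\bj_{-i^\star}(j')\bigr)\ -\ 1
\]
by a two-sided exchange argument: from a plan optimal for $\bj$, deleting $N$ from $S_{i^\star}$ gives a plan feasible for $\bj_{-i^\star}(j')$ (with $j'$ the resulting maximum of $S_{i^\star}$, possibly $0$; validity again by the triangle inequality) whose union loses exactly the element $N$, costing one breach; conversely, appending $N$ to $S_{i^\star}$ in a plan optimal for $\bj_{-i^\star}(j')$ is legal exactly because $\bone_{i^\star}(j',N)=1$ and adds exactly $N$ to the union, saving one breach. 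Since $\bj_{-i^\star}(j')$ agrees with $\bj$ on coordinates $1,\dots,i^\star-1$ and is strictly smaller in coordinate $i^\star$, it precedes $\bj$ lexicographically, so the induction hypothesis applies; the plan is maintained by appending $z_N$ to defender $i^\star$'s list in the stored plan of the minimizing predecessor.

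The step I expect to be the main obstacle is the non-generic case in which several defenders carry the same maximal index, $j_{i_1}=\dots=j_{i_r}=N$ with $r\ge2$: attack $N$ is then covered $r$-fold, so peeling these defenders one at a time and charging a full $-1$ each time would over-credit the union by $r-1$ and underestimate $F(\bj)$. The recursion must instead charge the $-1$ only when $N$ is assigned to a single defender (equivalently, peel all tied defenders at once), and the delicate point is verifying that the algorithm does precisely this; it cannot be sidestepped, because the recursion at a well-formed state may still reference tied states among its predecessors, so correctness of $F$ on such ``redundant'' states is genuinely needed. Everything else is routine: the lexicographic schedule is sound and the procedure terminates, since every recursive reference strictly decreases one coordinate while fixing the earlier ones; the stored chains are legal by construction; the reduction from arbitrary defender trajectories to chain families is immediate from the reachability characterization; and the correctness of $\boldsymbol\ell$ follows from the same induction, each stored plan being an already-verified one extended by a single legal append.
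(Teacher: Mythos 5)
Your proposal is correct and follows essentially the same route as the paper's proof: the same DP state (each defender's last thwarted attack is prescribed by $\bj$), the same peeling of a maximal coordinate with the $-1$ charged only when $|\cS|=1$, and the same backward reconstruction of $\boldsymbol{\ell}$. Your formulation via unions of thwart-chains and the explicit two-sided exchange argument is a slightly more rigorous rendering of the paper's argument, and the ``delicate'' tie case you flag is resolved exactly as you describe by the algorithm's $|\cS|=1$ test.
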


\begin{remark} 
Alg.~\ref{alg:two} relies on the subtle point that $i^* \in \argmax_i \, j_i$ because if not, then we do not know whether to subtract $1$ when we do the update; by setting $i^* \in \argmax_i \, j_i$, we remove the question of whether a defender $i'$ assigned to a later $j_{i'}$ can also thwart attack $j_{i^*}$.
\end{remark}

\begin{remark}
Alg.~\ref{alg:two} assumes that the defenders can start at whatever locations they want, but can be modified for fixed defender starting locations (or a set of possible starting locations) by redefining $\bone_i (0, j)$ to indicate whether they can reach attack $j$ from their starting locations. It can also be modified for the important case where attacks cause varying amounts of damage, with attack $j$ doing $w_j$ damage (should it not be intercepted); see for instance the Iron Dome missile defense system, which prioritizes attacks based on potential damage estimates \cite{military}. To make this modification, replace $-1$ with $-w_{j_{i^*(\bj)}}$ in line 7 and $f(0,\dots,0) = c = n$ with $f(0,\dots,0) = c = \sum_j w_j$.
\end{remark}

\vspace{-1cm}
\begin{algorithm}
\caption{Dynamic programming for infinite horizon defenders.}\label{alg:two}
\KwData{Attacks $\{(z_j, t_j)\}_{j=1}^n$ ; defender speeds $\bv = (v_1, \dots, v_m$)}
\KwResult{$\opt(\bv,\{(z_j, t_j)\}_{j=1}^n)$}
$f(0, \dots, 0) = n$,~ $c = n$,~ $\bj^{\min} = (0, \dots, 0)$ \tcc*[r]{initialization} 
\BlankLine
\For(\tcc*[f]{compute f}){$\bj \in \{0,1,\dots,n\}^m$}{
 $\cS = \argmax_i \, \{j_i\}$\;
 Choose $i^*(\bj) \in \cS$\;
 Choose $j^*(\bj) \in \argmin_{j'} \, \big\{ f(\bj_{-i^*(\bj)}(j')) : j' < j_{i^*(\bj)} \text{ and } \bone_{i^*}(j', j_{i^*(\bj)})  \big\}$ 
 
  \eIf{$\abs{\cS} = 1$}{
$f(\bj) = f(\bj_{-i^*(\bj)}(j^*(\bj))) - 1$
  }{
  $f(\bj) =  f(\bj_{-i^*(\bj)}(j^*(\bj)))$
  }
  \If{$f(\bj) < c$}{
  $c = f(\bj)$, ~$\bj^{\min} = \bj$
  }
}
\BlankLine
$\boldsymbol{\ell} = (\ell_1, \dots, \ell_m) = ([j^{\min}_1], \dots, [j^{\min}_m])$ \tcc*[r]{initialize defender lists}

$\bj^{\text{curr}} = \bj^{\min}$

\BlankLine

\While(\tcc*[f]{reconstruct defender lists}){$\bj^{\text{curr}} \neq (0, \dots, 0)$}{
\If{$j^*(\bj) \neq 0$}{
$\ell_{i^*(\bj)} = [j^*(\bj)] + \ell_{i^*(\bj)}$ \tcc*[r]{add $j^*(\bj)$ to front of list}
}

$ j^{\text{curr}}_{i^*(\bj)} = j^*(\bj)$
}

\BlankLine

\Return{$\opt(\bv, \{(z_j,t_j)\}_{j=1}^n) = c$, ~$\boldsymbol{\ell}$}

\end{algorithm}
\vspace{-0.5cm}
Given $m$ defenders and $n$ attackers, the number of computations needed to run Alg.~\ref{alg:two} is on the order of $(n+1)^{m+1}$ (we need to run through $(n+1)^m$ values of $\bj$, and each update takes up to $n$ time for the comparisons).

\subsection{Monotoniticy-based computational acceleration}\label{sec::monotonicity}

In order to investigate team heterogeneity, we compute $\opt(\bv, \{(z_j, t_j)\}_{j=1}^n)$ for all $\bv$ whose elements $v_i$ are at $g$ evenly-spaced locations in a range $(v_{\min}, v_{\max}]$.\footnote{For instance, if $g = 5$ and $(v_{\min}, v_{\max}] = (0,1]$, we measure $\bv$ where $v_i \in \{0.2, 0.4, 0.6, 0.8, 1\}$ for all $i$.} We refer to $g$ as the number of \emph{grains}. If we were to run Alg.~\ref{alg:two} for all combinations $\bv$ of speeds, the complexity becomes $O((n+1)^{m+1} g^m)$, which gets extremely large very quickly.

However, as each attack sequence is evaluated on all $\bv$, we can take advantage of the monotonicity of $\opt$ over $\bv$ to reduce the amount of computation needed. 

In particular, for any sequence $\{(z_j, t_j)\}_{j=1}^n$,
\begin{align}
    \bv \leq \bv' \implies \opt(\bv) \geq \opt(\bv') 
\end{align}
since faster defenders can always emulate slower ones and thus achieve (at least) as good a result on any attack sequence. This means that
\begin{align}
    &\opt(\bv) = \opt(\bv') = k \text{ for some } \bv \leq \bv'
    \\ \implies &\opt(\bv'') = k \text{ for all } \bv \leq \bv'' \leq \bv' \, .
\end{align}
Thus we know $\opt(\bv'') = k$ for a range of $\bv''$, without having to run Alg.~\ref{alg:two}. Taking the set of values $\bv \in (v_{\min}, v_{\max}]^m$ (of given grains), for any $\{(z_j, t_j)\}_{j=1}^n$ we can evaluate $\opt(\bv, \{(z_j, t_j)\}_{j=1}^n)$ in a strategic order to minimize the number of times we need to run Alg.~\ref{alg:two}. This is discussed in greater detail in the Appendix.

\section{Unit Horizon Theoretical Results} \label{sec::unit-horizon}

This section considers defenders with a unit horizon of incoming attacks. The general setup is 
\begin{itemize}
\item We consider two defenders with speeds $v_1 \geq v_2$.
    \item We consider a perimeter $\cX$ homeomorphic to $\cS^1$ (a circle\footnote{We consider this case because it has a number of nice symmetries, and because perimeters enclosing a simply-connected 2D area are homeomorphic to $\cS^1$.}), with distances determined by arc length and total length normalized to $1$; we represent $\cX = [-1/2, 1/2]$ (but $-1/2$ and $1/2$ are the same point). To denote this situation, we define the distance function
    \begin{align}
        \text{dist}(y_1, y_2) = \min \big\{|y_1 - y_2|, 1 - |y_1 - y_2| \big\}
    \end{align}
    (a rescaled version of \eqref{eq::circle}).
    The maximum possible value of $\text{dist}(y_1, y_2)$ is $1/2$, and we assume they start at maximum distance from each other, i.e., at antipodal points. 
    \item The $n$ attackers are generated according to Setting 2 from \Cref{subsec::settings}: attacker $t$ appears at time $t$, uniformly (and independently) over $\cX$.
    \item The defenders have a unit horizon in time: at any given time they only see the next incoming attack, though they also know $n$ and the current time $t$.

\end{itemize}
Therefore the defenders' policy can be thought of as a sequence of decisions taken at unit time intervals (i.e. when the next attack is revealed), which is naturally formulated as a Markov Decision Process (MDP) \cite{puterman2014markov} with $n$ steps, with the reward being the number of thwarted attacks.

To simplify the MDP we can remove one state variable since, by symmetry, we can rotate $\cX$ (or relabel it) so that at the beginning of any time step, defender 1 is at location $0$. We can also reflect it so that defender 2 is on the positive half. Thus the state at time $t$ (just before the location of the next attack is revealed) can be denoted by a single parameter $a(t)$, indicating the distance between the two defenders. Then the next attack's location $x(t+1)$ is revealed, in the coordinate system relative to the defenders' positions.

\subsection{Policy and Reward}

A \emph{unit-horizon policy} is a function $f: [0,1/2] \times [-1/2,1/2] \to [0,1/2]$. The inputs are $a(t)$, $x(t)$ and the number of remaining attacks, and the output is $f(a(t),x(t)) = a(t+1)$. As described above, $a(t+1)$ is the distance between the two defenders at time $t+1$. $f$ must satisfy the condition
\begin{align}
    f(a(t), x(t)) \in [a(t) - v_2 - v_1, a(t) + v_2 + v_1]
\end{align}
The policy then produces a reward
\begin{align}
    r(t) := r(a(t), x(t), f(a(t), x(t))) \in \{0,1\}
\end{align} 
the reward, based on whether the given movement makes it possible for the attack to be thwarted ($r(t) = 1$ if so, $=0$ if not). $r(t)$ is given as follows:
\begin{align*}
    r(t) = \begin{cases} 1 & \text{if } \dist(0, x(t)) \leq v_1 \text{ and } [\text{dist}(x(t), a(t)) - v_2, \text{dist}(x(t), a(t)) + v_2] \\ 1 &  \text{if } \dist(a(t),x(t)) \leq v_2 \text{ and } f(a(t), x(t)) \in [x(t) - v_1, x(t) + v_1] \\ 0 & \text{otherwise} \end{cases}
\end{align*}
The reason for this is that by symmetry (of the perimeter and of the attacks), given the distance $a(t+1) = f(a(t), x(t))$ between the defenders at the start of the next step, the ability of the defenders to stop future attacks does not depend on their locations. Thus, if the defenders can stop the current attack and end at distance $a(t+1) = f(a(t), x(t))$ for the next step, this is always preferable to ending at the same distance \emph{without} making the capture. 

Hence $r(t) = 1$ under policy $f$ if and only if this is possible, which can be split into two cases: (i) defender $1$ makes the capture; (ii) defender 2 makes the capture. If either of these are feasible, $r(t) = 1$; if neither are, $r(t) = 0$.



\begin{remark}
If $\text{dist}(a(t), x(t)) > {v_2}$ and $\text{dist}(0, x(t)) > {v_1}$, this means that neither defender can reach the next attack and hence $r(t) = 0$ no matter what.
\end{remark}

\subsection{Optimal defender policy}
Fix a defender policy $f$. For a given total number $N$ of incoming attacks and an initial distance $a$ between the two defenders, we define the expected reward $J(a; N)$ of the defenders as the expected total number of thwarted attacks, i.e.,

\begin{equation}
J(a; N) := \mathbb{E}_x \left[ \sum_{t=0}^{N-1} r(t) \right] \text{ under policy } f,
\end{equation}
where the expectation is over the attack locations $x(t)$. With this definition, we are interested in determining the policy $f$ that leads to the highest expected reward. We show in the Appendix that for a wide range of values for $v_1, v_2$ and $N$, the optimal strategy should $(i)$ always thwart the currently-known if possible. We next prove that the optimal policy subsequently should $(ii)$ always maximize $a(t)$ subject to the first constraint. That is:

\begin{proposition} \label{prop::maximize-distance}
$f^*$ maximizes $J(a; N)$ if (ii) $a(t+1)$ is maximized for all inputs, over all policies that satisfy $(i)$ (i.e. capture when possible).
\end{proposition}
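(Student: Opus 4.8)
The plan is to argue by backward induction on the number $N$ of remaining attacks. Define the value function $V_N(a)$ to be the supremum, over all policies obeying condition $(i)$ (capture whenever possible), of the expected reward $J(a;N)$ starting from inter‑defender distance $a$ with $N$ attacks left. The inductive claim has two parts: (I) $V_N$ is non‑decreasing on $[0,1/2]$, and (II) the greedy policy $f^{*}$ of the statement attains $V_N$. The base case $N=0$ is trivial since $V_0\equiv 0$.

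For the inductive step I would first dispose of part (II). Condition $(i)$ pins down the immediate reward: given state $a$ and revealed attack $x$, the reward is $R(a,x):=\mathbf{1}[x\in\mathrm{cov}(a)]$, where $\mathrm{cov}(a):=\{x:\dist(0,x)\le v_1\}\cup\{x:\dist(a,x)\le v_2\}$, independently of the policy. What the policy still controls is $a(t+1)$, which must lie in a set $A(a,x)$ that, in each of the three scenarios (defender $1$ captures / defender $2$ captures / no capture), is an interval intersected with $[0,1/2]$. By the inductive hypothesis $V_{N-1}$ is non‑decreasing, so for every $x$ the continuation value $\max_{a'\in A(a,x)}V_{N-1}(a')$ is attained at $a(t+1)=\max A(a,x)$ — which is precisely what $f^{*}$ prescribes. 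This yields the Bellman identity $V_N(a)=\mathbb{E}_x[R(a,x)+V_{N-1}(\alpha(a,x))]$ with $\alpha(a,x):=\max A(a,x)$, and proves (II) for $N$.

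It remains to prove (I): $V_N(a)\le V_N(\tilde a)$ whenever $a\le\tilde a$. Via the Bellman identity this reduces to building a coupling of the uniform attack location between the two worlds, i.e. a measure‑preserving map $\phi$ of $\cX$ such that, pointwise in $x$, $R(a,x)+V_{N-1}(\alpha(a,x))\le R(\tilde a,\phi(x))+V_{N-1}(\alpha(\tilde a,\phi(x)))$; integrating over $x$ then gives the claim. The key geometric input is that $\mathbb{E}_x[R(a,x)]$ equals the capped Lebesgue measure of the union of an arc of half‑length $v_1$ about $0$ and an arc of half‑length $v_2$ about $a$, and this is non‑decreasing in $a\in[0,1/2]$ (sliding the second arc away from the first can only decrease their overlap, barring wrap‑around). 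Hence $\phi$ can be chosen to send $\mathrm{cov}(a)$ into $\mathrm{cov}(\tilde a)$ — concretely, keep the defender‑$1$ arc $[-v_1,v_1]$ fixed and translate the exposed part of the defender‑$2$ arc by $\tilde a-a$ onto part of $[\tilde a-v_2,\tilde a+v_2]$, extending $\phi$ arbitrarily elsewhere — which already gives $R(a,x)\le R(\tilde a,\phi(x))$ for all $x$.

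The main obstacle is verifying the successor‑distance inequality $\alpha(a,x)\le\alpha(\tilde a,\phi(x))$, because the covered regions from $a$ and $\tilde a$ are not nested (the defender‑$2$ arc \emph{slides}, it does not grow), so this must be checked case by case on which defender captures — using that translating the attack towards defender $2$'s position only enlarges the interval of reachable successor distances, and being careful that the translation keeps $\phi$ injective on the circle. The genuinely delicate points are the $x\notin\mathrm{cov}(a)$ that $\phi$ maps \emph{into} $\mathrm{cov}(\tilde a)$, which must exist when $|\mathrm{cov}(a)|<|\mathrm{cov}(\tilde a)|$: there the $\tilde a$‑world gains the $+1$ reward but may land at a small successor distance, so to compensate I would strengthen the inductive hypothesis to also include the bound $V_{N-1}(1/2)-V_{N-1}(b)\le 1$ for all $b$ (future value drops by at most one attack from the best distance), which should itself be establishable by a companion one‑step coupling. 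Finally, the degenerate regimes — $v_2\ge 1/2$, or $v_1+v_2\ge 1/2$ so that wrap‑around overlap occurs, or $\mathrm{cov}(a)=\cX$ — should be peeled off and handled directly, since they only flatten $V_N$.
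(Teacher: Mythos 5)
Your overall strategy coincides with the paper's: induct on the number of remaining attacks, carry along the auxiliary claim that the value function is non-decreasing in the separation $a$, and use the one-step decomposition $J(a;N+1)=s(a)+\mathbb{E}_x[J(f(a,x);N)]$ to conclude that, given monotonicity of the continuation value, maximizing $a(t+1)$ pointwise in $x$ is optimal. Where you diverge is in how much weight the monotonicity step gets. The paper dispatches it in one sentence (``$J(a;N+1)$ is the sum of two increasing functions''), which implicitly asserts that $\mathbb{E}_x[J(f^*(a,x);N)]$ is non-decreasing in $a$; you correctly observe that this is not automatic, because the covered set slides rather than grows as $a$ increases and the attainable successor distances depend on which defender captures. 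Your proposed fix --- a measure-preserving coupling $\phi$ of the attack location sending the covered set for $a$ into that for $\tilde a$, together with the companion bound $V_{N-1}(1/2)-V_{N-1}(b)\le 1$ to absorb the attacks that become newly capturable at the larger separation --- is exactly the right shape of argument: a newly captured attack contributes $+1$, which dominates any drop in continuation value precisely when the future value varies by at most one over all separations. So your ``delicate points'' are not over-caution; they are the content the paper's proof leaves implicit. The caveat is that, as written, your argument is a plan rather than a proof: the case analysis establishing $\alpha(a,x)\le\alpha(\tilde a,\phi(x))$ on the commonly covered set and the companion one-step bound are both stated as things you ``would'' verify. (A minor bookkeeping difference: the paper's base case takes $J(a;0)=s(a)$, i.e., one attack, whereas you set $V_0\equiv 0$; either convention works.)
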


We next show necessary and sufficient conditions for perfect defense, i.e. when no (fixed-time) attack sequence can force a breach.
\begin{theorem}[The perfect defense theorem]
For any pair of defenders with speeds $v_1, v_2$ where $v_2 \leq v_1$, there exists a sequence of attacks that breaches if and only if $v_1 < 1/2$ and $v_1 + 3v_2 < 1$. Furthermore, if $v_1 \geq 1/2$ or $v_1 + 3v_2 \geq 1$, the defenders can defend indefinitely even with a one-step horizon. Furthermore, if any sequence of attacks guarantees a breach, there is a sequence of at most $6$ attacks that does so.
\end{theorem}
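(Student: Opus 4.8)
The plan is to split on three regimes according to the hypotheses: (a) $v_1 \geq 1/2$; (b) $v_1 < 1/2$ and $v_1 + 3v_2 \geq 1$; (c) $v_1 < 1/2$ and $v_1 + 3v_2 < 1$. Regimes (a) and (b) will give perfect defense (settling the ``only if'' direction and the first ``Furthermore''); regime (c) will produce an explicit short breaching sequence (settling the ``if'' direction and the second ``Furthermore''). Two elementary consequences of the hypotheses are worth recording up front. In regime (b), since $v_1 \geq v_2$, we have $v_1 + v_2 = (v_1 + 3v_2) - 2v_2 \geq 1 - 2v_2$, which is $\geq 1/2$ when $v_2 \leq 1/4$, and if $v_2 > 1/4$ then $v_1 + v_2 > 1/2$ directly; so $v_1 + v_2 \geq 1/2$ always holds. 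In regime (c), $4v_2 \leq v_1 + 3v_2 < 1$ forces $v_2 < 1/4$. Regime (a) is immediate: the perimeter has total length $1$, so $\dist$ never exceeds $1/2 \leq v_1$, and defender $1$ can move to any point from anywhere in one step, thwarting every attack unaided.

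For regime (b) I would exhibit an invariant on the inter-defender distance. Put $S = \{a \in [0,1/2] : a \geq 1 - v_1 - v_2\}$. When $a \in S$ the two coverage arcs $[-v_1, v_1]$ (defender $1$ at normalized position $0$) and $[a - v_2, a + v_2]$ (defender $2$) have no gap: the near-side gap is closed because $a \leq 1/2 \leq v_1 + v_2$, and the far-side gap because $a \geq 1 - v_1 - v_2$. I claim the defenders can keep $a(t) \in S$ while thwarting every attack. The antipodal start $a(0) = 1/2$ lies in $S$ since $v_1 + v_2 \geq 1/2$. For the inductive step, given $a(t) \in S$ and attack $x$, use the policy: if $\dist(x,\text{defender }2) \leq v_2$, defender $2$ thwarts $x$ and defender $1$ repositions; otherwise (by gap-freeness $\dist(x,\text{defender }1) \leq v_1$) defender $1$ thwarts $x$ and defender $2$ repositions. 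In each case one computes the interval of inter-defender distances attainable by the repositioning defender and checks it meets $[1 - v_1 - v_2, 1/2]$. In the first case this needs only $v_1 \geq v_2$ and $v_1 + v_2 \geq 1/2$ (using $\dist(0,x) \geq a - v_2 \geq 1 - v_1 - 2v_2$, the repositioning defender $1$ can reach distances up to $\min(1/2, \dist(0,x)+v_1) \geq \min(1/2, 1 - 2v_2) \geq 1 - v_1 - v_2$). In the second case $v_1 + 3v_2 \geq 1$ is exactly what is used: the repositioning defender $2$ starts at distance $D := \dist(x,\text{defender }2) > v_2 \geq 1 - v_1 - 2v_2$ from $x$ and can therefore reach a distance $\min(1/2, D + v_2) \geq 1 - v_1 - v_2$. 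So a valid $a(t+1) \in S$ can always be chosen, and the defenders defend forever.

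For regime (c) I would give an explicit two-attack breaching sequence (which a fortiori gives the stated bound of $6$ and establishes breach possibility). If $v_1 + v_2 < 1/2$, then from the antipodal start the coverage $[-v_1,v_1] \cup [1/2 - v_2, 1/2 + v_2]$ has a gap of length $1/2 - v_1 - v_2 > 0$, and one attack placed there breaches. If $v_1 + v_2 \geq 1/2$ (so also $v_2 < 1/4$ and $v_1 > 1/4$), fix $\epsilon \in (0,\, 1 - v_1 - 3v_2)$ and set $z_1 = 1/2 - v_2 - \epsilon$. Only defender $1$ can reach $z_1$: defender $2$ is at distance $v_2 + \epsilon > v_2$, while defender $1$ is at distance $1/2 - v_2 - \epsilon \leq v_1$. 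Hence, to thwart $z_1$, defender $1$ must move to $z_1$, while defender $2$ is confined to the arc $[1/2 - v_2, 1/2 + v_2]$; so no matter what the defenders do, after the first step the union of their two coverage arcs lies inside $[z_1 - v_1, z_1 + v_1] \cup [1/2 - 2v_2, 1/2 + 2v_2]$, a single arc of length $\max(2v_1,\, v_1 + 3v_2 + \epsilon) < 1$. I place $z_2$ in the complement of that arc, chosen in advance; it cannot be thwarted by any defender response to $z_1$. This is a $2$-attack breaching sequence.

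I expect the case analysis in regime (b) to be the main obstacle: the ``attainable-distance interval meets $S$'' verification must cover the circle's wrap-around carefully — whether the antipode of the thwarted attack lies inside the repositioning defender's reachable arc, and whether the attack point lies inside or outside defender $1$'s reachable arc — with each sub-case collapsing to a short inequality whose binding instances are precisely $v_1 \geq v_2$ (when defender $2$ thwarts) and $v_1 + 3v_2 \geq 1$ (when defender $1$ thwarts). A secondary point is to justify the reduction to the single state variable $a$ via the rotational and reflective symmetry of the circle, and to be explicit that the adversary's $z_2$ in regime (c) is fixed before seeing the defenders' response — which is legitimate because we bounded the union of \emph{all} possible post-step coverage arcs rather than the coverage under one particular play.
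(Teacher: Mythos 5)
Your regimes (a) and (b) are essentially the paper's own argument: the condition $a \geq 1 - v_1 - v_2$ is exactly the paper's ``full coverage'' criterion (the paper normalizes to $v_1+3v_2=1$, where $1-v_1-v_2 = 2v_2$), the antipodal start seeds the induction, and the two cases (slow defender captures / fast defender captures) with the binding inequalities $v_1\ge v_2$ and $v_1+3v_2\ge 1$ are the same ones the paper checks. The wrap-around bookkeeping you flag is real but routine, and your ``maximize the repositioning defender's distance'' computation is, if anything, a cleaner way to close the second case than the paper's endpoint argument.

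The genuine gap is in regime (c). Your two-attack breach is valid only against defenders pinned to the antipodal start; but in the setting this theorem is meant to cover, the defenders may choose their initial positions with knowledge of the attack sequence (this is explicit in the problem statement of Section~2, and the ``even with a one-step horizon'' phrasing signals that the breach direction is asserted against the \emph{strongest} defenders). Against free starting positions, \emph{no} two-attack sequence can ever breach two defenders of any speeds: defender $1$ starts at $z_1$ and waits, defender $2$ starts at $z_2$ and waits. So a two-attack construction cannot prove the stated claim, and this is precisely why the paper's proof needs six attacks. The paper first plays $-1/4$ and $1/4$ at consecutive times: since these are at distance $1/2 > v_1$, no single defender can cover both, which \emph{forces} a known configuration (one defender at $-1/4$ at time $1$, the other at $1/4$ at time $2$) regardless of where the defenders chose to start; it then appends two attacks $x = 1/4 - v_2 - \varepsilon$, $y = 1/4 - v_1 - v_2 - 2\varepsilon$ that defeat the configuration in which the \emph{fast} defender took $-1/4$, and finally uses the time-reversal and reflection symmetries of the problem to prepend the mirrored pair $-y, -x$, so that the alternative assignment (slow defender takes $-1/4$) is defeated by the reversed half of the sequence. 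Your construction is missing this forcing step entirely (and, even granting the antipodal start, the subsidiary choice of which defender occupies which pole). To repair your proof you would need to replace the reliance on the initial positions with an argument that the first attacks themselves pin down the defenders, which leads you back to something like the paper's six-attack sequence.
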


Both proofs are given in the Appendix.

\section{Simulation Results} \label{sec::simulation_results}

We conduct simulations for each of the settings from \Cref{subsec::settings}. Our experiments are run as follows: 
\begin{enumerate}
    \item Generate attacks $\{(z_j, t_j)\}_{j=1}^n$ randomly, either with fixed attack times $t_j = j$ or uniformly-random attack times in $[0,t_{\max}]$.
    \item Compute $\opt(\bv, \{(z_j, t_j)\}_{j=1}^n)$ for $\bv \in (v_{\min}, v_{\max}]^m$, at $g$ intervals.
    \item Repeat the above for $T$ trials and average the resulting values for each $\bv$.
\end{enumerate} 

We conduct all of our experiments on a circular perimeter of circumference $1$, where the defenders are not permitted to leave the perimeter (so maximally distant points are at opposite ends and have distance $1/2$). Comparison of the results sheds light on the conditions which favor heterogeneous defender teams and those which favor homogeneous teams and/or single super-defenders.

 
The structure of the simulations means each combination of defender speeds is evaluated on the same set of attack sequences, which makes the comparison fairer, and allows us to significantly speed up the computation when evaluating $\opt(\bv, \{(z_j, t_j)\}_{j=1}^n)$ for many values of $\bv$ on a single attack sequence $ \{(z_j, t_j)\}_{j=1}^n$, by exploiting the fact that $\opt$ is a monotonically-decreasing step function in $\bv$ (as described in \Cref{sec::monotonicity}).

The full list of parameters is given in \Cref{tab:parameters}.

\begin{table}[!t]
	\centering
	\caption{Parameters of the experiments}
	\label{tab:parameters}
	\begin{tabular}{@{}ll@{}}
		\toprule
		Symbol         & Description                    \\ \midrule
		$m$          & Number of defenders ($m=2$ unless specified otherwise)              \\
		$n$          & Number of attacks               \\
		$T$ & Number of trials \\
		$t_{\max}$ & Size of attack window (not needed for heterogeneous setting (ii)) \\
		$(v_{\min}, v_{\max}]$ & Range of defender speeds (inclusive of $v_{\max}$ but not $v_{\min}$) \\
		$g$ & Number of speed values measured (grains) within $(v_{\min}, v_{\max}]$ \\
		\bottomrule
	\end{tabular}
\end{table}

\subsection{Simulation Results}


In \Cref{fig:heterogeneous-many-trials}, we simulate sequences of $n = 25$ attacks of both settings, where the perimeter $\cX$ is a unit circle of circumference $1$ and $m = 2$ defenders; for uniformly random attack times we set $t_{\max} = 25$ to get the same density of attacks in both cases. This is analyzed over the speed range $(v_{\min}, v_{\max}] = (0, 0.6]$ with $g = 256$ grains. The left column shows results for uniformly-random attack times; the right column shows results for fixed attack times.

The results are given as surface plots, taking defender speeds $v_1, v_2$ and returning $\opt(v_1, v_2)$ (ignoring the assumption in the analysis that $v_1 \geq v_2$, so the plots are symmetric about the line $v_1 = v_2$). We give:

\begin{itemize}
    \setlength{\itemindent}{0em}
    \item {\bf Top row:} $\opt(v_1, v_2)$ for a single sequence of attacks. This can be viewed as $T = 1$, and is meant to give a visualization of how adjusting the speeds of the defenders changes the ability to defend against a particular sequence. Since $\opt(v_1, v_2)$ takes integer values, we have a monotonically-decreasing step function.
    \vspace{0.2pc}
    \item {\bf Middle and bottom rows:} $\opt(v_1, v_2)$ when averaged over $T = 200$ randomly-generated attack sequences. Middle row gives the front view to show overall shape; bottom row gives the back view to show the ridge at $v_1 = v_2$. This ridge, which appears for both uniformly-random attack times and fixed attack times, shows that on average homogeneous defenders are less efficient (per combined speed) than heterogeneous defenders.
\end{itemize}

\begin{figure}[!ht]
     \centering
     \begin{subfigure}[b]{0.48\textwidth}
        \centering
        \includegraphics[width=\textwidth]{./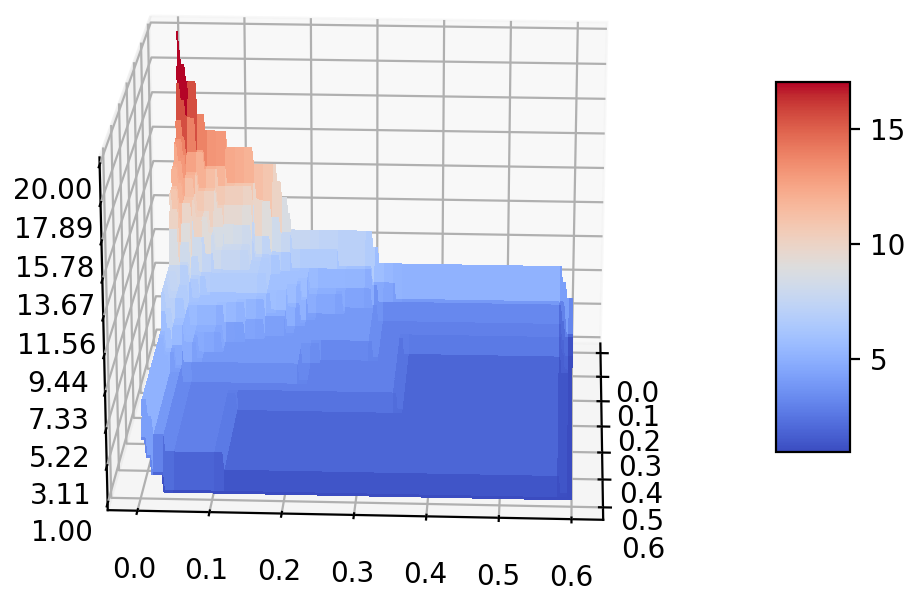}
    \end{subfigure}
    \hfill
    \begin{subfigure}[b]{0.48\textwidth}
        \centering
        \includegraphics[width=\textwidth]{./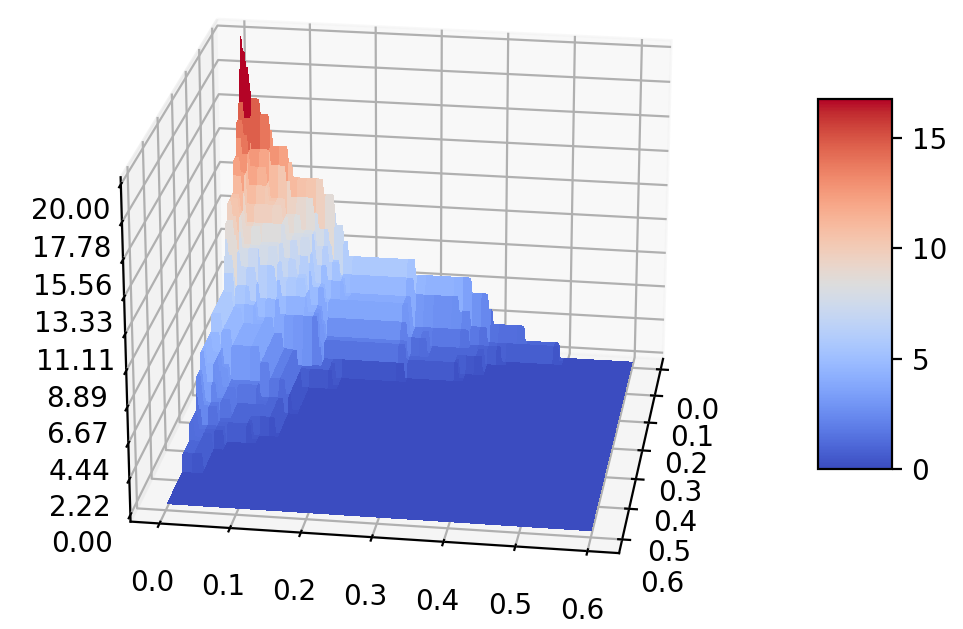}
    \end{subfigure}

    \begin{subfigure}[b]{0.48\textwidth}
        \centering
        \includegraphics[width=\textwidth]{./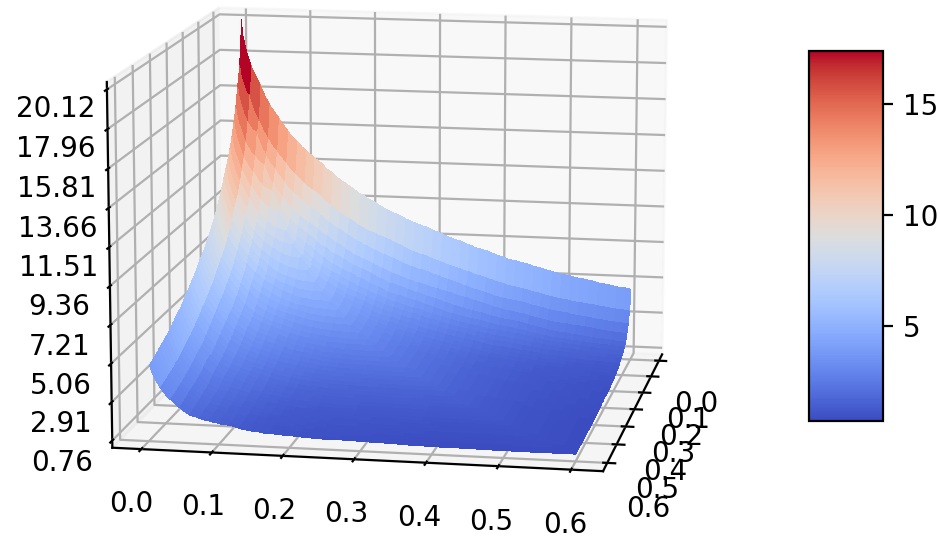}
    \end{subfigure}
    \hfill
    \begin{subfigure}[b]{0.48\textwidth}
        \centering
        \includegraphics[width=\textwidth]{./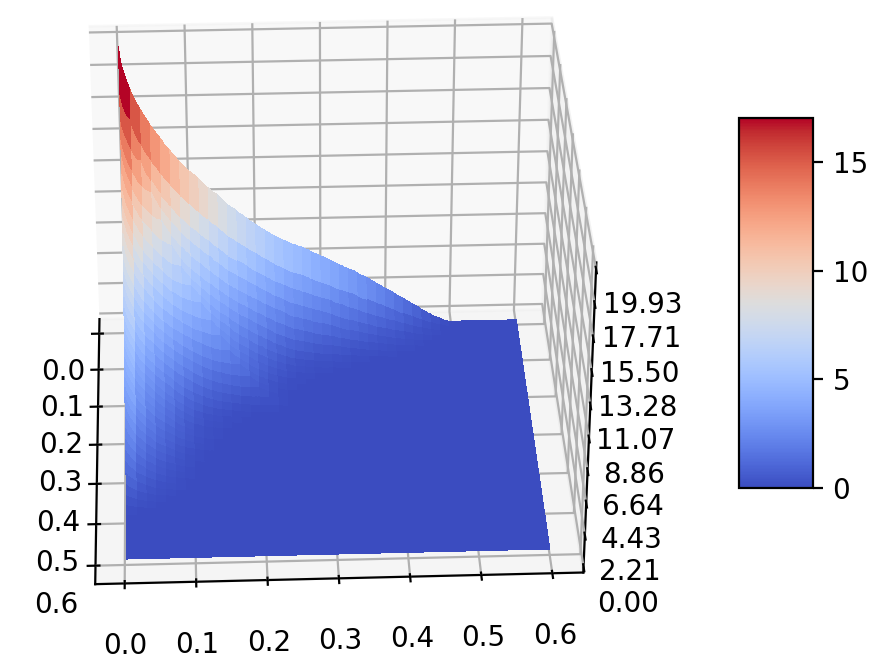}
    \end{subfigure}
    
    \begin{subfigure}[b]{0.48\textwidth}
        \centering
        \includegraphics[width=\textwidth]{./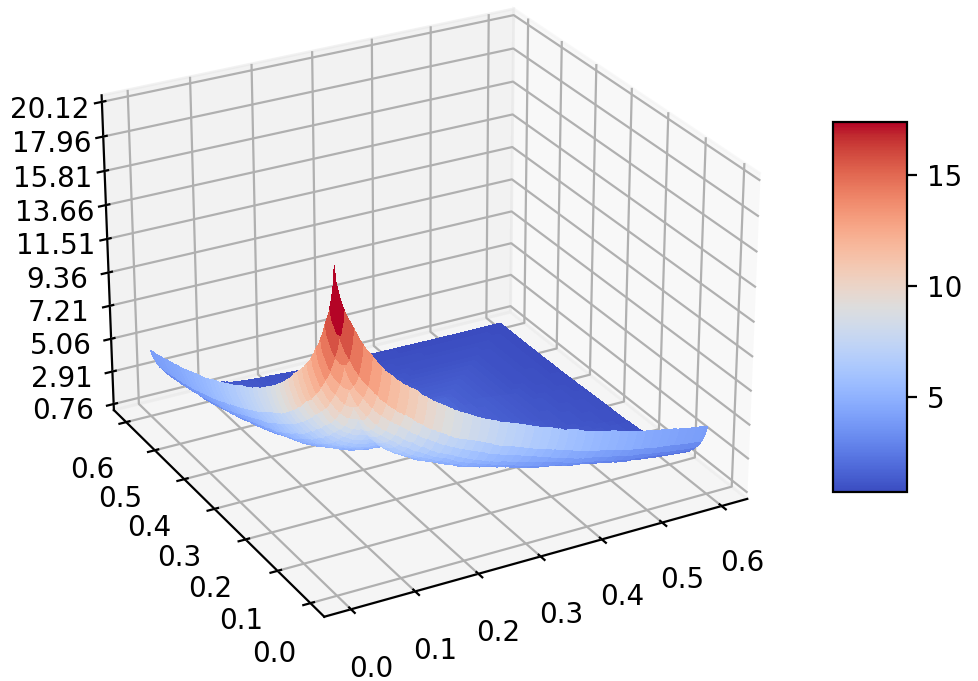}
    \end{subfigure}
    \hfill
    \begin{subfigure}[b]{0.48\textwidth}
        \centering
        \includegraphics[width=\textwidth]{./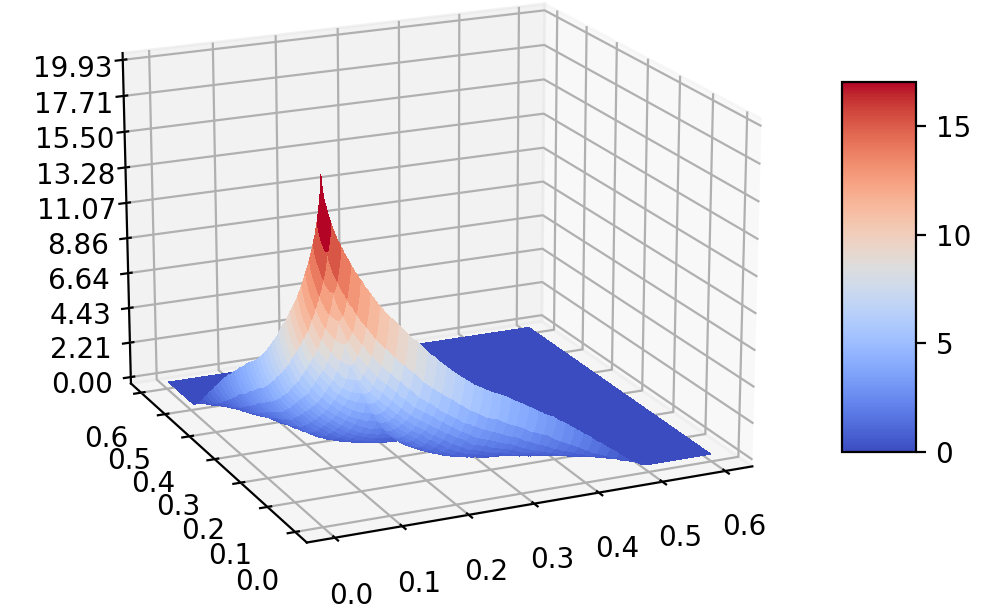}
    \end{subfigure}
        \caption{$2$ defenders evaluated at $g = 256$ grains for speeds $(v_{\min}, v_{\max}] = (0,0.6]$ for $200$ trials. \emph{Top row:} 1 trial, front view. \emph{Middle row:} 200 trials, front view. \emph{Bottom row:} 200 trials, back view, showing the `ridge' at the center line $v_1 = v_2$ (for both attack types). \\ {\bf Left:} Uniformly-random attack times, $n = 25$, $t_{\max} = 25$
        \\ {\bf Right:} Fixed attack times, $n = 25$}
        \label{fig:heterogeneous-many-trials}
\end{figure}


\noindent From this we can make a number of interesting observations:
\begin{itemize}
    \item $\opt(v_1, v_2)$ is generally larger for the uniformly random attack times, as attacks which are close together in time are much harder to defend. In particular, with fixed attack times $\opt(v_1, v_2) = 0$ for sufficiently large defender speeds (one defender of speed $1/2$ is 
    \item As mentioned, there is a ridge on $v_1 = v_2$ (the back view makes it clearly visible). This shows that on average, homogeneous defenders are less effective than well-designed heterogeneous ones.
    \item Under uniformly-random attack times, each `half' (cutting at the $v_1 = v_2$ line) is empirically convex, while under fixed attack times, each `half' is convex near the $v_1 = v_2$ ridge but becomes concave again near the edge of the plot (as seen in the back view) and as the defender speeds increase (as can be seen on the edge in both views). 
\end{itemize}
We also consider the question: what is the optimal \emph{mix} of defender speeds? To answer this, we need to consider what we want to hold constant, since obviously faster defenders are always better; an obvious starting point is to look at defenders of a fixed total speed, and consider what ratio of speeds performs the best. This also means that we are comparing defender teams whose reachable sets are of equal total size (ignoring overlaps), and (because we evaluate over a grid of possible values of $\bv$) means we compare the values of $\opt(\bv)$ on a diagonal line.

In \Cref{fig:heterogeneous-best-mix}, we show the best (empirical) mixture: for each value of $v_{tot} = v_1 + v_2$, the returned value is
\begin{align}
    \frac{v_2^*}{v_{tot}} \text{ where } v_2^* := \argmin_{v_2 \leq v_{tot}/2} \opt((v_{tot}-v_2, v_2))
\end{align}
That is, given a total speed of $v_{tot}$, what is the optimal fraction of the speed `budget' to assign to the slower defender? A value of $0.5$ signifies homogeneous defenders are best; a value of $0.0$ signifies that a single super-defender is best; and a value in between signify some heterogeneous mix of defenders is best.

\begin{figure}[!ht]
     \centering
        \includegraphics[width=\textwidth]{./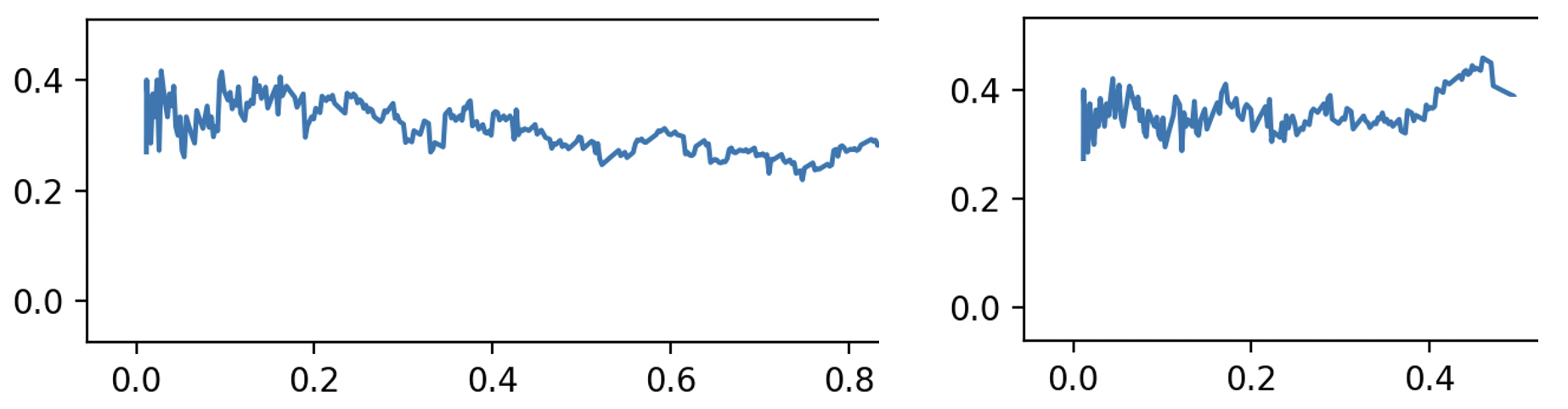}
        \caption{Empirical optimal ratio $v_2/v_{tot}$, for various values of $v_{tot}$. \emph{Left:} Uniform attack times. \emph{Right:} Fixed attack times.}
        \label{fig:heterogeneous-best-mix}
\end{figure}
These are based on the same experiments as shown in \Cref{fig:heterogeneous-many-trials}. Note that the fixed attack times graph ends at $v_{tot} = 0.5$; past that, both one single super defender and homogeneous defenders will defend perfectly, so measuring the minimum no longer makes sense. However, it is striking that the benefits of a heterogeneous team persist so close to that threshold, and the optimal ratio remains relatively stable over a wide range of speed `budgets' in both settings.

\paragraph{Computational complexity of simulations:}  



The results of the monotonicity-based computational acceleration discussed in \Cref{sec::monotonicity} can be seen in \Cref{fig:monotonicity-speedup}, corresponding to the simulations shown in \Cref{fig:heterogeneous-many-trials}. As before, the left-hand column is the results for uniformly-random time attacks, and the right-hand column is the results for fixed time attacks, while the top row represents a single trial (corresponding to the top row of \Cref{fig:monotonicity-speedup}) and the bottom row correspond to the average of $T = 200$ trials. 

Each square is a $256 \times 256$ grid, representing the $256^2$ combinations of speeds $\bv$ for which we want to compute $\opt(\bv)$; the shade of a given point represents the fraction of times Alg.~\ref{alg:two} had to be run on for that specific $\bv$ (as opposed to the value being known already due to monotonicity), running from yellow (Alg.~\ref{alg:two} never had to be run) to purple (Alg.~\ref{alg:two} always had to be run). Note that because they represent a single trial (each), every point in the top two graphs takes a value of either $0$ or $1$.


We note a few things: (i) the savings increase strongly where $\bbE [\opt(\bv)]$ is flatter (this is expected since $\nabla_{\bv} \bbE [\opt(\bv)]$ corresponds to the probability that there is a step at $\bv$, and having a step nearby means the condition is less likely to be satisfied); (ii) there are darker points at regular intervals (such as in the center), which correspond to the combinations which are evaluated earlier.

Even with $m = 2$ and the strategic use of monotonicity, which can save up to about 95\% of the running time, this can get big fairly quickly.

\begin{figure}[!ht]
    \centering
     
    \begin{subfigure}[b]{0.40\textwidth}
        \centering
        \includegraphics[width=\textwidth]{./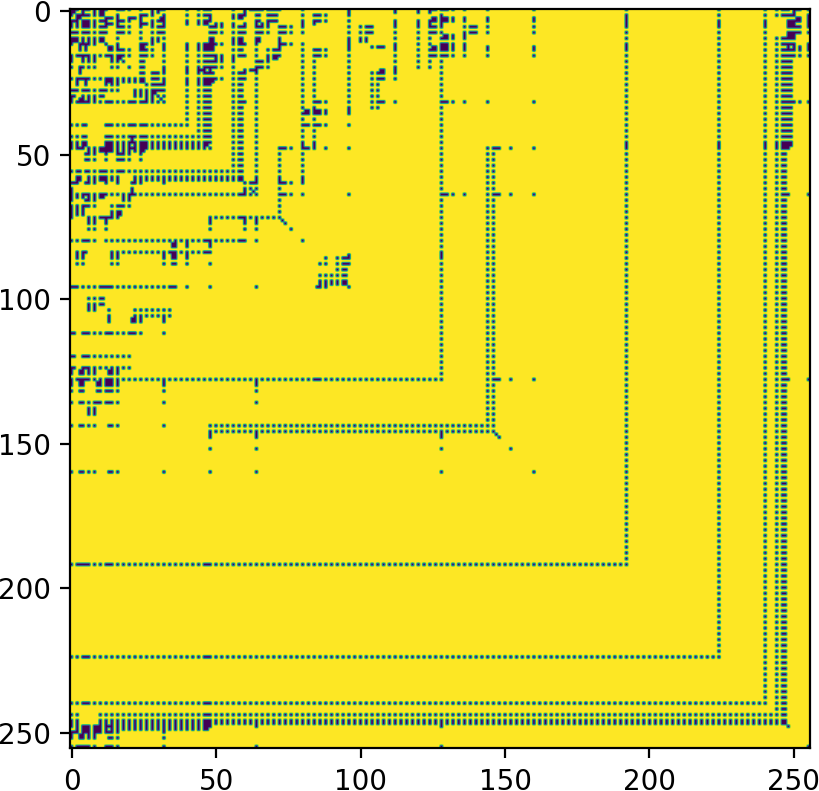}
    \end{subfigure}
    \hfill
    \begin{subfigure}[b]{0.40\textwidth}
        \centering
        \includegraphics[width=\textwidth]{./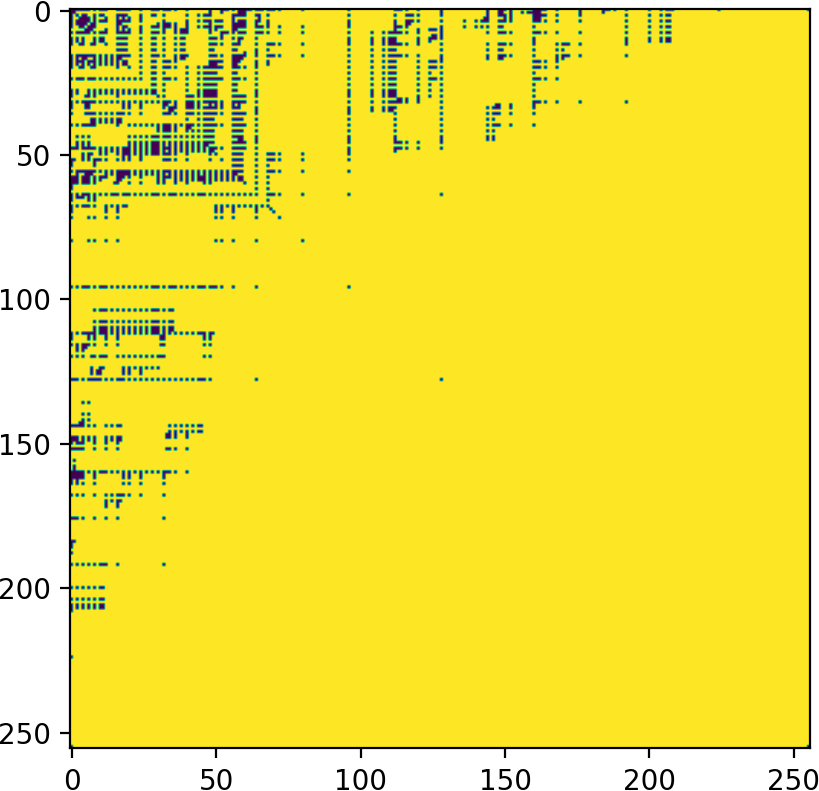}
    \end{subfigure}
     
     \vspace{0.5pc}
     
    \begin{subfigure}[b]{0.40\textwidth}
        \centering
        \includegraphics[width=\textwidth]{./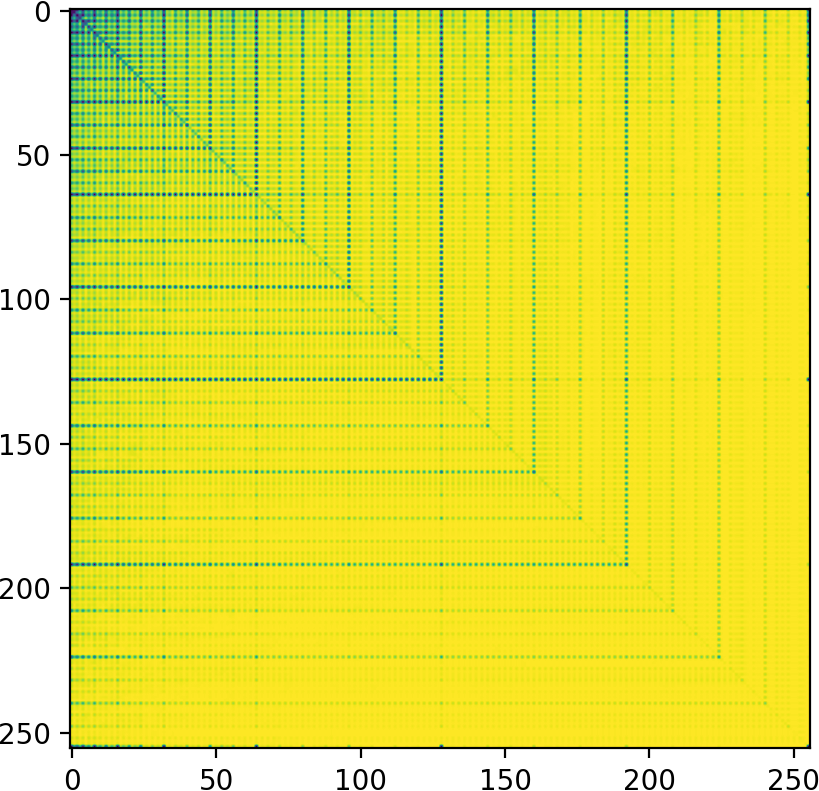}
    \end{subfigure}
    \hfill
    \begin{subfigure}[b]{0.40\textwidth}
        \centering
        \includegraphics[width=\textwidth]{./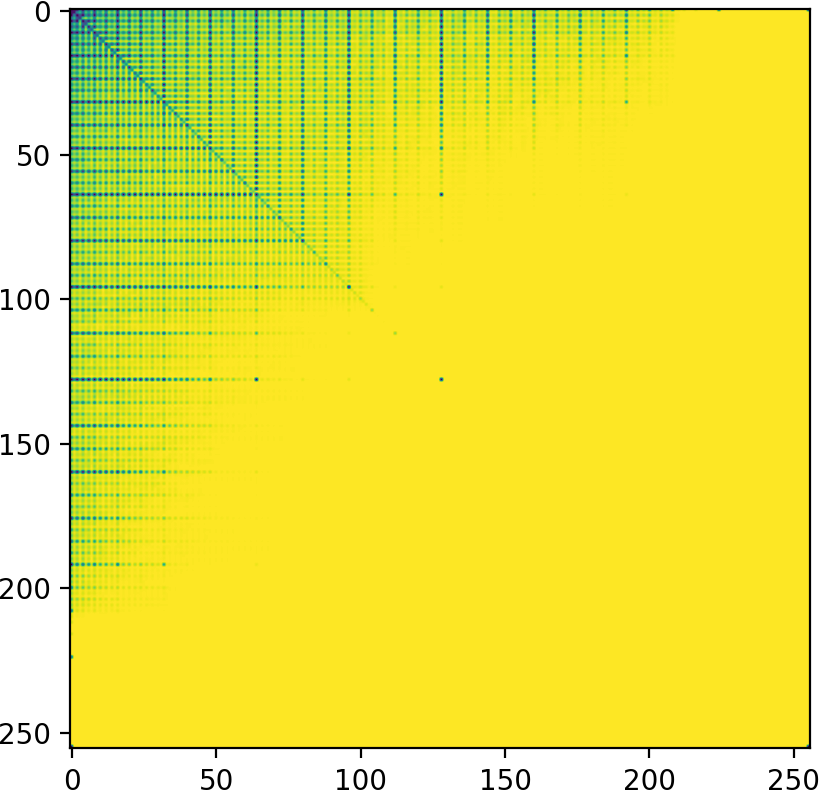}
    \end{subfigure}
     
    \caption{Monotonicity savings for the trials depicted in \Cref{fig:heterogeneous-many-trials}. Uniformly-random attack times on the left, and fixed attack times on the right. Axes labeled by position in the vector of possible speeds ($0$ to $g-1$). Top row is for one trial (corresponding to the single trials shown in \Cref{fig:heterogeneous-many-trials}) and bottom is average over 200 trials.}
        \label{fig:monotonicity-speedup}
\end{figure}

\subsection{Simulations for unit horizon}
Simulation results for the case of two defenders on a circular perimeter with unit horizon are shown in Figure~\ref{fig:unit_results}. Note that in this case, heterogeneity is not beneficial, it is even detrimental. The optimal speed allocation is to assign the entire speed budget to one defender or split it equally.

\begin{figure}[!ht]
     \centering
    \begin{subfigure}[b]{0.45\textwidth}
        \centering
        \includegraphics[trim={1.9cm 1cm 1.5cm 2cm},clip,width=\textwidth]{./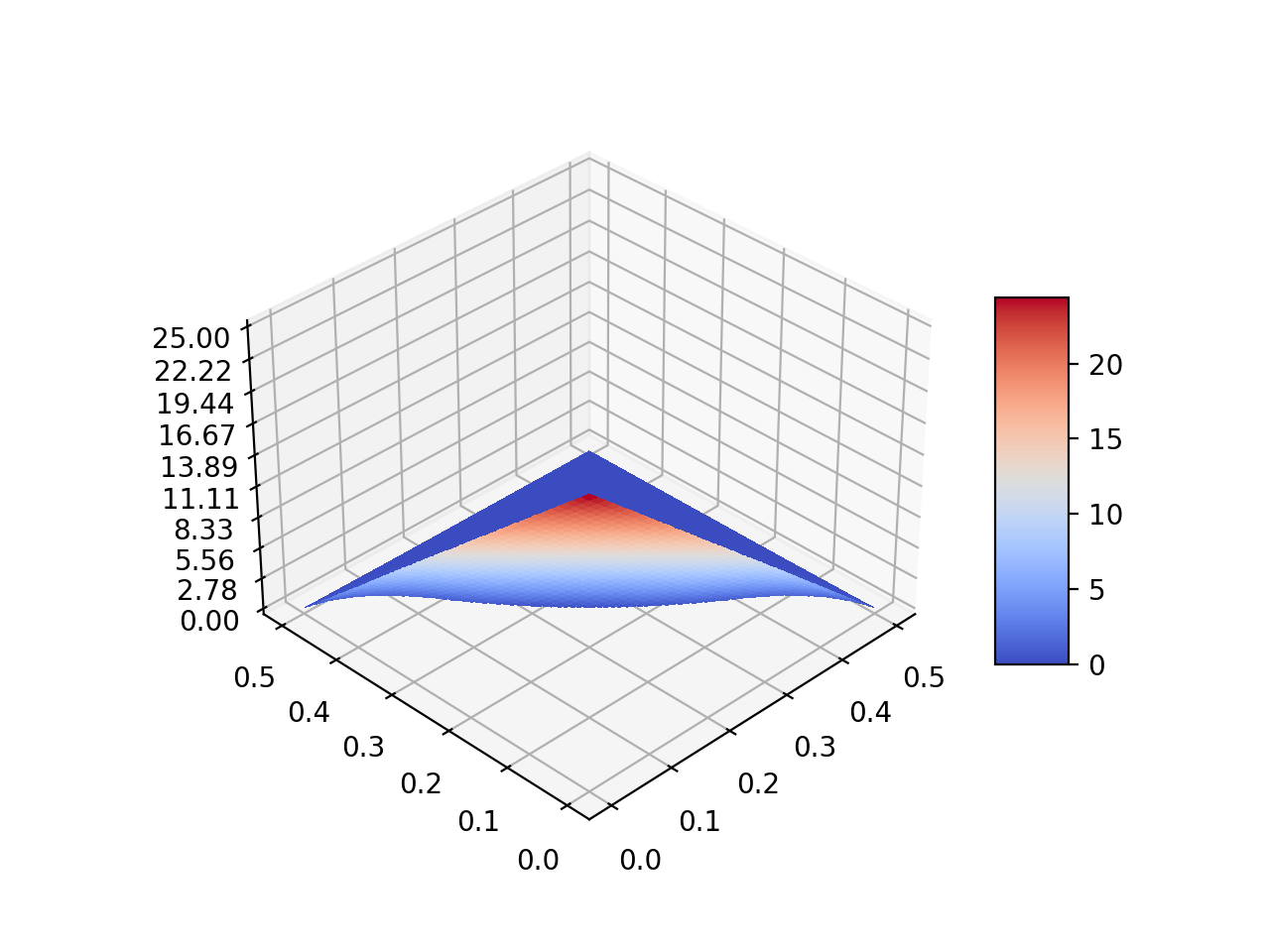}
    \end{subfigure}
    \hfill
    \begin{subfigure}[b]{0.45\textwidth}
        \centering
        \includegraphics[trim={2.5cm 1.5cm 1.8cm 2cm},clip,width=\textwidth]{./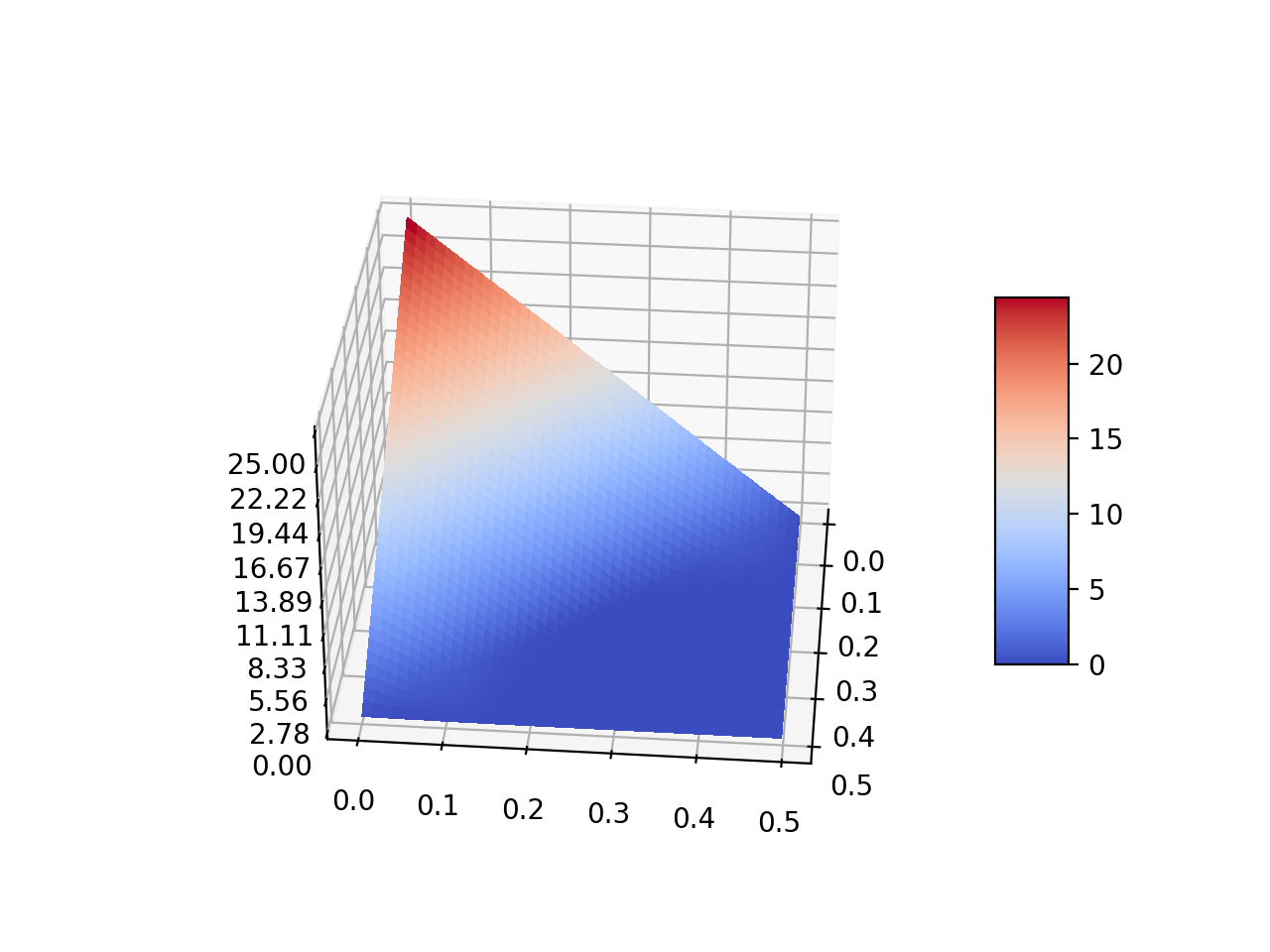}
    \end{subfigure}
    
 \caption{Unit-horizon case, $2$ defenders evaluated at $g = 128$ grains for speeds $(v_{\min}, v_{\max}] = (0,0.5]$ for $10000$ trials and $n=25$ attacks. \emph{Left:} back view, note the lack of the `ridge' seen in \Cref{fig:heterogeneous-many-trials}. \emph{Right:} front view.}
        \label{fig:unit_results}
        
\end{figure}

\section{Conclusion}

We introduced and studied a minimal model to map out how and why heterogeneity in robotic teams affects performance in perimeter defense applications.


On the one hand, we showed that a heterogeneous team achieves better performance when full information of the oncoming attacks is available to the defenders. Moreover, we uncovered a seemingly universal behavior, where the ratio of optimal defender speeds is nearly constant for a range of problem parameters.


On the other hand, we proved that heterogeneity is detrimental to the system's performance in the converse case where minimal attack information is available. These results suggest that heterogeneity is potentially a non-robust property, since less system information dramatically decreases its usefulness.


Future directions involve quantifying and studying the use of heterogeneity when intermediate levels of information are available to the defenders. This would explore the existence of a phase transition where heterogeneity changes from decreasing to improving system performance. Possible scenarios include varying the horizon length of incoming attacks between the cases of $1$ and $\infty$ considered in the paper. Another scenario augments the unit time horizon with the knowledge of the number of remaining attacks. In particular, we conjecture that even in this case defenders should always capture attacks if possible and that heterogeneity remains detrimental. {Lastly, we wish to perform numerical simulations for a larger number of defenders.}

%

%
\section*{Appendix}

\subsection{Proofs}
\subsubsection{Proof of Theorem~1}
In this proof, for readability we denote $i^* := i^*(\bj)$ and $j^* := j^*(\bj)$; these values depend on $\bj$ (and are important to keep track of to reconstruct the optimal defense plan $\boldsymbol{\ell}$).

Alg.~1 depends on the function $f(\bj): \{0, 1, \dots, n\}^m \to \bbN$, which denotes the following: suppose that defender $i$ (with speed $v_i$) is required to thwart attack $j_i$ \emph{and then no others after that} (but defender $i$ can thwart attacks arriving before $t_{j_i}$, and if $j_i = 0$, then defender $i$ is not allowed to thwart any attack); $f(\bj)$ is the minimum number of defenders that can be let through under these constraints. Then the following hold:
\begin{itemize}
    \item $f(0,\dots,0) = n$ (the base case from which we recursively compute $f$);
    \item $\opt(\bv, \{(z_j,t_j)\}_{j=1}^n) = \min_{\bj} f(\bj)$ (this allows us to extract the correct value by keeping track of this minimum).
\end{itemize}
We then want to recursively compute $f(\bj)$ for all $\bj \in \{0, 1, \dots, n\}^m$. This can be done by considering 
\begin{align}
    \cS = \argmax_{i \in [m]} \, j_i
\end{align}
i.e. the set of defenders that thwart the latest attack in $\bj$. We then (arbitrarily) select a defender $i^* \in \cS$ to consider.  

We can then ask: suppose $j'$ is the last attack that defender $i^*$ thwarts before thwarting $j_{i^*}$. Then $j' < j_{i^*}$ and $\bone_{i^*}(j', j_{i^*}) = 1$, since otherwise $i^*$ cannot defend both attacks. The best the defenders can do in this case is to thwart $f(\bj_{-i^*}(j'))$ attacks, then have defender $i^*$ thwart $j_{i^*}$: if $|\cS| = 1$ ($i^*$ is the unique defender required to thwart $j_{i^*}$) then this is $1$ more attack thwarted in total; otherwise it's redundant. Thus, minimizing over all possible $j'$, we have
\begin{align}
    f(\bj) = \min_{j'} \big\{ f(\bj_{-i^*}(j')) : j' < j_{i^*} \text{ and } \bone_{i^*}(j', j_{i^*}) = 1  \big\} + \bone\{|\cS|=1\}
\end{align}
(where $\bone\{|\cS|=1\}$ is the indicator function for $|\cS| = 1$). We also let $j^*$ be (any) $j'$ which minimizes this, which will be important for reconstructing the optimal defense plan $\boldsymbol{\ell}$.

By iterating over all $\bj$ in lexicographic order, we can compute $f(\bj)$ using the above recursion (and starting from $f(0,\dots,0) = n$). We keep track of the minimum value (and the $\bj^{\min}$ which minimizes $f(\bj)$) and output this as $\opt(\bv, \{(z_j,t_j)\}_{j=1}^n)$.

To reconstruct the defense plan, we start with $\bj^{\min}$ and read backward: we know that the optimal defense plan (which is the optimal defense plan for $\bj^{\min}$) starts by defending according to $\bj^{\min}_{i^*}(j^*)$ and then having $i^*$ defend $j^{\min}_{i^*}$ at the end; we then recurse to $\bj^{\min}_{i^*}(j^*)$ until we arrive at $\bj = (0,\dots,0)$. \qed

\subsubsection{Proof of Proposition~1}

To begin, we want to give a rough justification for assuming condition (i) in Proposition 1. First, we conjecture that it is always optimal to thwart a reachable attack; furthermore, we prove that under certain conditions (specifically, if the number of future attacks is sufficiently small relative to a parameter dependent on the defender speeds) then thwarting it is always optimal.

Our setup is: the defenders are currently at a distance $a$ and an attack comes in at $x$ which can be reached by at least one of the defenders, after which $N$ further (uniformly and randomly distributed) attacks will follow. We have two policies: $f'$, which declines to capture in this case, and $f^*$, which follows the conjectured optimal strategy of always capturing, and always maximizing the distance between the defenders conditional on whether or not a capture is made. 

We let $J_{f'}(a; x; N)$ be the expected reward of following policy $f'$ from the given conditions (current separation $a$, attack at $x$, $N$ attacks after that), and $J_{f^*}(a; x; N)$ be the policy of following $f^*$ (including the initial capture which $f'$ does not do). Then if we can show that $J_{f'}(a; x; N) \leq J_{f^*}(a; x; N)$ it proves that $f^*$ outperforms all policies which do not make the capture at the current time. Since our result is such that if it works for this value of $N$, it also works for smaller values of $N$, it means that the optimal policy must always capture if possible and thus, by Proposition 1 in the main work, the policy $f^*$ is therefore optimal under the given conditions.

We also assume that the current separation between the defenders (before deciding whether to thwart the current reachable attack) is $a \geq 2v_2$; this assumption is justified because by Proposition 1 in the main work, we know that all things equal we want to maximize separation ($J$ is monotonic in $a$), and $2v_2$ separation can always be maintained even when capturing since the non-capturing defender can always maximize its distance to the current attack; even if the initial position of the defenders does not satisfy this, after $2$ attacks it can always be achieved (even with captures).


\begin{proposition}\label{prop::wproof}
Let $w = \min(1 - 2v_1, 1-v_1-3v_2, 2v_2, v_1-v_2)$, and let the current separation between the defenders be $a \geq 2v_2$; then if there are $N \leq \frac{1}{w}$ attacks left after the current attack, thwarting the current attack (if possible) is always optimal.
\end{proposition}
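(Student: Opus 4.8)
The plan is to induct on the number $N$ of attacks remaining \emph{after} the current one. When $N = 0$ nothing follows, so capturing the current attack is trivially optimal; this is the base case. For the inductive step, assume that whenever strictly fewer than $N$ attacks remain, the optimal play from \emph{any} separation is ``capture if possible, then maximize separation'' --- that is, the behaviour of the policy $f^*$, the ``maximize separation'' part being forced by \Cref{prop::maximize-distance} once capturing is known to be optimal. Under this hypothesis, every policy that declines the current (reachable) attack is dominated by $f'$ (decline now, then play $f^*$) and every policy that captures is dominated by $f^*$, so it suffices to prove $J_{f^*}(a;x;N) \ge J_{f'}(a;x;N)$. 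Writing $J(b;N)$ for the expected reward of $f^*$ run for $N$ attacks from separation $b$, letting $a^\star = f^*(a,x)$ be the separation left after capturing and $a'$ the (larger) separation left after declining, we have $J_{f^*}(a;x;N) = 1 + J(a^\star;N)$ and $J_{f'}(a;x;N) = J(a';N)$, so the goal reduces to the ``discrete Lipschitz'' estimate
\[
J(a';N) - J(a^\star;N) \le 1 .
\]

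I would obtain this from two facts. \textbf{(I) A geometric bound:} $a' - a^\star \le w$. This is a finite case analysis on which defender captures the current attack and on which side of defender~$1$ (on the circle) the attack lies; using the standing assumption $a \ge 2v_2$ one checks that declining rather than capturing buys at most $w$ extra separation, with each of the four terms $1 - 2v_1$, $1 - v_1 - 3v_2$, $2v_2$, $v_1 - v_2$ the binding bound in one of the cases (the second being precisely the breachability quantity from the perfect-defense theorem, and $a \ge 2v_2$ being what rules out the worst configurations for the slower defender). \textbf{(II) A coupling bound:} for $b' \ge b$ one has $J(b';N) - J(b;N) \le N\,(b'-b)$. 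Here I would run the two $f^*$-processes from $b$ and $b'$ on the \emph{same} attack sequence; the key geometric observation is that defender~$1$'s reachable arc $[-v_1,v_1]$ is independent of the separation, while defender~$2$'s reachable arc merely \emph{translates}, so the set of attack locations the $b'$-process can capture but the $b$-process cannot is (essentially) the leading edge of that arc, of measure at most $b'-b$; hence at each of the $N$ steps the extra one-step capture probability of the wider process is at most $b'-b$, and in the events where the two processes act differently the wider one has smaller remaining separation, so its continuation advantage is non-positive by \Cref{prop::maximize-distance} --- letting the recursion close at $N(b'-b)$. Combining (I) and (II) with $N \le 1/w$ gives $J(a';N) - J(a^\star;N) \le N(a'-a^\star) \le Nw \le 1$, completing the inductive step; and since such an estimate for a given $N$ holds a fortiori for every smaller value, proving it at $N = \lfloor 1/w \rfloor$ covers all $N \le 1/w$.

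The main obstacle is fact (II), specifically the behaviour of the two coupled processes when they \emph{disagree}: when the wider-separation process captures an attack the other cannot, it must spend movement on the capture while the other spreads freely, which threatens to reverse the ordering of the separations and break the recursion. The resolution is to show that in exactly those disagreement events the continuation difference is non-positive (so only the already-counted one-step contribution remains), and otherwise that the $f^*$-update map is order-preserving and $1$-Lipschitz in the separation; both require a careful analysis of the arc geometry on $\cS^1$, including the wrap-around regime, and the same care is needed for the case analysis behind fact (I). Everything else --- the reduction to $J_{f^*}\ge J_{f'}$, the base case, and the final arithmetic --- is routine given these two facts and \Cref{prop::maximize-distance}.
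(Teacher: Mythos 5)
Your architecture is genuinely different from the paper's, and it contains a concrete gap. The paper does not induct and never compares separations: it upper-bounds the declining policy's per-step capture probability by the trivial bound $\min(1, 2(v_1+v_2))$ on the size of the union of the reachable sets, lower-bounds the capturing policy's per-step capture probability by $\max(2v_1, v_1+3v_2)$ (using the maintained separation $\geq 2v_2$), and notes that the difference of these two rates is exactly $w$, so over $N \leq 1/w$ steps the decliner gains at most $wN \leq 1$, which the extra capture cancels. Your route passes through the separation variable instead, and that is where it breaks: your fact (I), $a' - a^\star \leq w$, is false. Take $v_1 = 0.4$, $v_2 = 0.05$, $a = 2v_2 = 0.1$, and an attack at $x = 0.2$, which is reachable by defender $1$ but not by defender $2$. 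Then $w = \min(0.2,\, 0.45,\, 0.1,\, 0.35) = 0.1$, but declining allows $a' = \min(1/2,\, a + v_1 + v_2) = 0.5$, while capturing pins defender $1$ at $0.2$ and leaves defender $2$ in $[0.05, 0.15]$, so $a^\star = \dist(0.2, 0.05) = 0.15$ and $a' - a^\star = 0.35 = 3.5\,w$. Even granting your fact (II) in full, this yields only $J(a';N) - J(a^\star;N) \leq N(a'-a^\star) \leq 3.5$, nowhere near the required bound of $1$, so the inductive step does not close.

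The proposition survives because the one-step capture probability $s(a)$ saturates at $2v_1$ for all $a \leq v_1 - v_2$ (defender $2$'s reachable arc is swallowed by defender $1$'s): in the example above $s(a^\star) = s(0.15) = 2v_1 = 0.8$ while $s(a') = s(0.5) = 0.9$, a gap of exactly $w$ even though the separations differ by $0.35$. The correct per-step estimate must therefore be phrased in terms of reachable-set sizes rather than separations, which is precisely what the paper does --- and once stated that way, the coupling and the induction become unnecessary, since a crude upper bound on \emph{any} declining policy suffices. Separately, your fact (II) is only sketched; you correctly identify the disagreement events of the coupling as the hard part, but that difficulty is moot here given that fact (I) already fails.
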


\begin{proof}
We assume that $2v_1 \leq 1$ and $v_1 + 3v_2 \leq 1$ (otherwise by Theorem 2 from the main work it is possible to thwart all attacks no matter what, which by definition means the optimal policy thwarts the current attack).

Let $s_{f'}(i)$ and $s_{f^*}(i)$ be the expected sizes of the union of the reachable set at step $i$ after the current step, under policies $f'$ and $f^*$ respectively, where $f'$ does not thwart the attack and $f^*$ is the conjectured optimal policy. Then $s_{f'}(i) \leq \min(1, 2(v_1+v_2))$ and $s_{f^*}(i) \geq \max(2v_1, v_1 + 3v_2)$. The first inequality is a generic upper bound on the size of the reachable set union for defenders of speeds $v_1, v_2$; the second holds because if no capture is made, the defenders can achieve at least $v_1 + v_2$ separation, and if a capture is made, they can achieve at least $2v_2$ separation (the non-capturing defender maximizes separation), in which case the union of the reachable sets satisfies the given bound.

Thus, the probability that the attack at step $i$ is reachable (with the current attack considered step $0$) is bounded by the above, and hence
\begin{align}
    J_{f'}(a; x; N) &\leq \min(1, 2(v_1+v_2))N
    \\ \text{and }~~ J_{f^*}(a; x; N) &\geq 1 + \max(2v_1, v_1 + 3v_2)N
\end{align}
Subtracting using these bounds, we get that
\begin{align}
    J_{f^*}(a; x; N) - J_{f'}(a; x; N) &\geq 1 \!-\! \min(1 - 2v_1, 1-v_1-3v_2, 2v_2, v_1-v_2)N
    \\ &= 1 - wN
    \\&\geq 0 ~~\text{ if } N \leq \frac{1}{w} \, .
\end{align}
The extra ``$+1$'' comes from the fact that $f^*$ makes a capture at the current step, while $f'$ does not. Thus, we are done.
\qed
\end{proof}

The value $w$ is maximized when $v_1 = 3/8$ and $v_2 = 1/8$, yielding $w = 1/4$. We also show $w$ as a function of $v_1$ and $v_2$ in \Cref{fig::wplot}, with the regions defined by Theorem~2 in the main text removed, since these guarantee perfect defense and hence that all attackers can and should be captured.

\begin{figure}
    \centering
    \includegraphics[width=0.7\linewidth]{./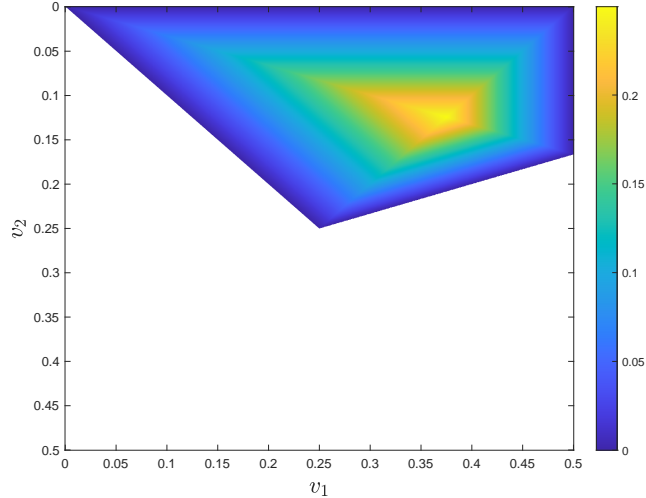}
    \caption{The quantity $w$ as a function of $v_1$ and $v_2$, with the regions defined by Theorem~2 in the main text removed.}
    \label{fig::wplot}
\end{figure}

Note also that the above holds as well if the number of future attacks is random (unknown to the defenders) with $\bbE[N]$ replacing $N$.

We note that the bound above is not the best achievable by this method; we can sharpen it by noting that under $f^*$, when a capture is \emph{not} made, the union of the reachable sets on the next step is $\min(1, 2(v_1 + v_2))$ (maximizing separation without needing to capture allows the defenders to have non-intersecting reachable sets).

We next proceed with the proof of Proposition~1 in the main text.
\begin{proof}[of Proposition~1]
The proof is by induction on the number of attacks $N$. We show $(ii)$ together with the following additional properties:
\begin{enumerate}[$(i)$]   \setcounter{enumi}{2}
    \item For any $N$, $J(a; N)$ is an increasing function of $a$;
\end{enumerate}
We start with the base case $N=0$. Note that given an initial defender distance $a$, the probability that the first attack can be defended is just the size of the union of the interval $[-{v_1}, {v_1}]$ (where the faster defender can reach) with the interval $[a-{v_2}, a+{v_2}]$ (where the slower defender can reach). We denote the size of this union as
\begin{align*}
    s(a) := s(a, {v_1}, {v_2}) := \begin{cases} 2{v_1} &\text{for } a \leq {v_1} - {v_2} \\ \min({v_1} + {v_2} + a, 2({v_1} + {v_2}), 1) &\text{for } a > {v_1} - {v_2}  \end{cases}
\end{align*}

Clearly, $J(a;0) \leq s(a)$, with equality if $(i)$ holds. Since $s(a)$ is a monotonically increasing function for $a \in [0,1/2]$, $(ii)$ holds. Clearly, $J(a;0) = s(a)$, so $(iii)$ holds as well.

Next, we proceed with the induction step and show $(ii)$. We make use of the following recursive formula:
\begin{equation}\label{eq::recursive-J}
J(a; N+1) = s(a) + \mathbb{E}_x [J(f(a,x); N)],   
\end{equation}
which holds since attack $N+1$ is defended with probability $s(a)$, by $(i)$, and by independence of the distributions of the attacks. Since $J(\cdot; N)$ is an increasing function, by $(iii)$, this expression is maximized by an $f$ maximizing $f(a,x)$, subject to the constraint in $(i)$. \eqref{eq::recursive-J} also shows $(iii)$, as $J(a; N+1)$ is the sum of two increasing functions. This concludes the induction step and therefore the proof.
\qed
\end{proof}

\subsubsection{Proof of Theorem~2}
We are using the parametrization of the perimeter as a unit-circumference circle from section 4.1 and allow the parametrization to wrap around the perimeter, i.e., e.g, $-1/4$ is equivalent to $3/4$.

\emph{Explicit attack sequence:} We first show that there exists a sequence of attacks that breaches if $v_1 < 1/2$ and $v_1 + 3v_2 < 1$. We devise an explicit attack pattern consisting of $6$ attacks which will successfully beat even very strong defenders. The key observations in designing this attack pattern are the following: 
\begin{enumerate}
    \item If $v_1 < 1/2$ (which we assume because otherwise all attacks can be defended) by setting $z_1 = -1/4$ and $z_2 = 1/4$ we can force one defender to be at $-1/4$ at time $1$ and the other to be at $1/4$ at time $2$.
    \item An attack pattern $(z_1, z_2, \dots, z_n)$ breaches if and only if the reverse pattern $(z_n, z_{n-1}, \dots, z_1)$ breaches. This is clear when we note that any valid movement of defenders is valid in reverse as well.
    \item An attack pattern $(z_1, z_2, \dots, z_n)$ breaches if and only if $(-z_1, -z_2, \dots, -z_n)$ also breaches.
    \item If we have a pattern $(-1/4, 1/4, x, y)$ which can only be defended by having the slower defender take the first attack, we can make it breach by expanding it to $(-y, -x, -1/4, 1/4, x, y)$. This is because to defend against the sequence starting at $-1/4$, we need the slower defender to take the attack at $-1/4$; but then to defend against $( -y, -x, -1/4, 1/4)$, by symmetry we need the slower defender to take the attack at $1/4$ (which corresponds to $-1/4$ in the original sequence). 
\end{enumerate}
Now we just need to find $x, y$ (there's no reason a priori to expect that $2$ additional attacks is the right number to use, it just happens to be what works) such that the sequence $(-1/4, 1/4, x, y)$ forces the slower defender to defend the second attack.

\paragraph{Finding $x, y$:} We assume that $v_1 < 1/2$ (otherwise it is trivial to defend against all attacks).

Suppose we have $-1/4, 1/4$ as the first two attacks (in that order) and the faster defender takes $-1/4$. Since $-1/4, 1/4$ are $1/2$ apart, it means the slower defender must defend $1/4$. We put the next attack at $x = 1/4 - v_2 - \varepsilon$; it is therefore out of reach of the slower defender and must be defended by the faster defender. We then put the last attack at $y = 1/4 - v_2 - v_1 - 2\varepsilon$, out of reach of the faster defender. It is out of reach of the slower defender too if $1/4 + 2 v_2$ is not sufficient to reach it (since the slower defender was at $1/4$ two time steps ago). Considering the negative-direction arc from $1/4$ to $1/4 - v_2 - v_1 - 2\varepsilon$, which has length $v_1 + v_2 + 2\varepsilon$, and the positive-direction arc from $1/4$ to $1/4 + 2v_2$, which has length $2v_2$, they fail to meet if their sum is less than $1$, i.e. if
\begin{align}
    v_1 + v_2 + 2\varepsilon + 2v_2 = v_1 + 3v_2 + 2\varepsilon < 1
\end{align}
Since we can choose $\varepsilon > 0$ as small as we like, this sequence works if $v_1 + 3v_2 < 1$ (and $v_1 < 1/2$). Thus for any pair of defenders satisfying this condition, for sufficiently small $\varepsilon > 0$ (in particular $\varepsilon < (1 - (v_1 + 3v_2))/2$ and $\varepsilon < (1/2 - v_1)/2$), the attack pattern
\begin{align}
\begin{split}
    \Big((v_1 + v_2 + 2\varepsilon) - 1/4, &(v_2 + \varepsilon) - 1/4, -1/4, 1/4, \\
    &1/4- (v_2 + \varepsilon), 1/4 - (v_1 + v_2 + 2\varepsilon)\Big)
    \end{split}
\end{align}
cannot be defended.

\paragraph{Perfect defense:} We conclude by showing the reverse direction, namely that any team of defenders such that $v_1 + 3v_2 \geq 1$ can always defend against any sequence of attacks. To show this, it is clearly sufficient (and necessary) to show it for the case of $v_1 + 3v_2 = 1$. To do this, we show the stronger result that this team can defend all attacks even with a $1$-step visibility horizon (i.e. the defenders only need to know the location of the next attack).

We say that the defenders have \emph{full coverage} if the union of their reachable sets is the entire perimeter. This means that, for the next time step at least, the defense cannot be breached. If they can guarantee full coverage at every step then they can defend against any sequence of attacks. Note that since $v_2 \leq v_1$, we have
\begin{align}
    1 =  v_1 + 3v_2 = (v_1 + v_2) + 2v_2
\end{align}
Since $v_1 + v_2 \geq 2v_2$ and $(v_1 + v_2) + 2v_2 = 1$, we can conclude that $v_1 + v_2 \geq 1/2$ (with equality iff $v_2 = v_1 = 1/4$), and hence the two defenders have reachable sets large enough to get full coverage.

Let $s(t)$ be the distance between the defenders at time $t$. Then they have full coverage at step $t$ if and only if \begin{align}
    s(t) \geq 1/2 - (v_1 + v_2 - 1/2) = 1 - (v_1 + v_2) = 2v_2
\end{align}
Our goal is now to show that if $s(t) \geq 2v_2$ then no matter where $x(t+1)$ (the next attack) happens, it can be defended in such a way that $s(t+1) \geq 2v_2$; then by induction, as long as $s(0) \geq 2v_2$ (which we guarantee by starting them off at opposite poles, i.e. $s(0) = 1/2$).

If the attack is reachable by the slower defender ($v_2$ to either side) it is easy to see that the slower defender can defend against it and the faster defender can imitate the movement (same distance and direction) to make $s(t+1) = s(t) \geq 2v_2$. If the attack is outside of the reachable set of the slower defender, then the faster defender must take it; but since it is outside the $2v_2$-wide reachable set of the slower defender, one of the two endpoints of that reachable set must be at least $2v_2$ away from attack location. The slower defender then goes to that endpoint, thus preserving $s(t+1) \geq 2v_2$.

Thus, we know that defenders satisfying $v_1 + 3v_2 \geq 1$ can defend against any sequence of attacks, even with a one-step horizon.
\qed

\subsection{Homogeneous defenders}

For a team of homogeneous defenders with speed $v$ against a sequence of attacks $\{(z_j, t_j)\}_{j=1}^n$, we show the following matching-based algorithm for determining the minimum number $m$ of such defenders are needed to thwart \emph{all} the attacks. The homogeneity of the defenders is key to this algorithm, which runs in time $O(\sqrt{n} |E|)$ where $|E|$ is the number of edges in the DAG $G$; this is in turn proportional to $n^2$ in the worst case and hence the overall run time is $O(n^{5/2})$.

First, we build a directed acyclic graph (DAG) $G$ on $n$ nodes $u_1, \dots, u_n$, each node representing an attack; we assume that they are sorted by time $t_1 \leq t_2 \leq \dots t_j$ (if not, we sort them in $O(n \log n)$ time). We put a directed edge $u_{j} \to u_{j'}$ (where $j' > j$) if and only if $\dist(z_{j}, z_{j'}) \leq v (t_{j'} - t_{j})$, i.e. if attack $j'$ can be reached by a defender which thwarts attack $j$.

Note that a (directed) path in $G$ corresponds to a sequence of attacks that can be defended by a single defender. Thus, the goal is to decompose $G$ into $m$ directed paths which cover all vertices.

This path decomposition is done via a (well known) reduction to maximum bipartite matching: we split each node $u_j$ into $u^{in}_j$ and $u^{out}_j$, where a directed edge $u^{in}_j \to u^{out}_j$ exists, and all edges $u_{j} \to u_{j'}$  are replaced with edges $u^{out}_{j} \to u^{in}_{j'}$. This does not change the minimum $m$.

Taking $\{u^{out}_j\}$ and $\{u^{in}_j\}$ as the two parts of a bipartite graph (and ignoring the $u^{in}_j \to u^{out}_j$ edges), any matching on this graph can be used to reconstruct a set of nonoverlapping paths that cover all the vertices of $G$ by putting back the $u^{in}_j \to u^{out}_j$ edges; since each $u^{in}_j$ has exactly one edge coming out of it (to $u^{out}_j$) and each $u^{out}_j$ has at most one edge coming out of it (if it is matched it has one, otherwise not) one can start at any unmatched $u^{in}_j$ and uniquely walk forward until reaching some unmatched $u^{out}_{j'}$, producing one directed path. 

Since there is one such path starting at each unmatched $u^{in}_j$ (and ending at each unmatched $u^{out}_{j'}$), a matching of size $k$ produces a set of $m = n - k$ directed paths covering the $G$ (and conversely any set of $m$ directed paths can be used to find a size-$(n-m)$ sized matching). Thus, finding the maximum $k$ gives the minimum numbers of defenders $m$.


Thus, we can use any well-known maximum bipartite matching algorithm (e.g. Hopcroft-Karp) to compute the number of defenders needed.

\begin{remark}
Note that a complete matching (which would imply that $0$ defenders could thwart all the attacks!) is by definition impossible because $G$ is acyclic.
\end{remark}

\noindent The algorithm is summarized in Alg.~\ref{alg:one} for ease of reference.
\begin{algorithm}
\caption{Homogeneous case.}\label{alg:one}
\KwData{Sequence of attacks $\{(z_j, t_j)\}_{j=1}^n$}
\KwResult{Minimum number of defenders $m$ required to defend all attacks.}
 Form DAG $G$ with nodes $u_1, \dots, u_n$ and edge $u_{j} \to u_{j'}$ iff $\dist(x_{i_1}, x_{i_2}) \leq s(t_{i_2} - t_{i_1})$\;
Split each node $u_i$ into $u^{in}_i$ and $u^{out}_i$. Add edges $u^{in}_i \to u^{out}_i$, and replace all edges $u_{i_1} \to u_{i_2}$ with edges $u^{out}_{i_1} \to u^{in}_{i_2}$\;
 Use the Hopcroft-Karp algorithm to find a maximum-bipartite matching on $G$ of size $k$\;
 $m \gets n - k$
\end{algorithm}

\subsubsection{Complexity}

There are two parts to this problem, each of which could probably be sped up with a little thought:
\begin{enumerate}
\item Building DAG $G$ and/or its bipartite counterpart.
\item Computing the maximum matching.
\end{enumerate}
Ignoring $d$, which we assume to be small, (1) takes $O(n^2)$ time, and (2) with Hopcroft-Karp takes $O(\sqrt{n} |E|)$ where $E$ is the number of edges, which in the worst case means $O(n^{5/2})$. 

\subsubsection{Simulation Results}

The goal of the simulations is to explore the relationship between speed and number of defenders (if we halve the speed of the defenders, how many more do we need to keep the same level of effectiveness?)

Our experiments follow the pattern: 
\begin{enumerate}
    \item Generate attacks $\{(z_j, t_j)\}_{j=1}^n$ randomly, either with fixed attack times $t_j = j$ or uniformly random attack times in $[0,t_{\max}]$.
    \item Compute $M(v, \{(z_j, t_j)\}_{j=1}^n) $ for $v \in (v_{\min}, v_{\max}]$ or $\bv \in (v_{\min}, v_{\max}]^m$, at $g$ intervals.
    \item Repeat the above for $T$ trials and average the resulting values for each value of $v$ or $\bv$.
\end{enumerate} 

We compute $M(v, \{(z_j, t_j)\}_{j=1}^n)$ for all $\bv$ whose $v_i$ are at $g$ evenly-spaced locations in the range $(v_{\min}, v_{\max}]$ of speeds. We refer to $g$ as the number of \emph{grains}.


The full list of parameters is given in \Cref{tab:parameters}.

\begin{table}[!t]
	\centering
	\caption{Parameters of the experiments}
	\label{tab:parameters}
	\begin{tabular}{@{}ll@{}}
		\toprule
		Symbol         & Description                    \\ \midrule
		$m$          & Number of defenders ($m=2$ unless specified otherwise)              \\
		$n$          & Number of attacks               \\
		$T$ & Number of trials \\
		$t_{\max}$ & Size of attack window (not needed for heterogeneous setting (ii)) \\
		$(v_{\min}, v_{\max}]$ & Range of defender speeds (inclusive of $v_{\max}$ but not $v_{\min}$) \\
		$g$ & Number of speed values measured (grains) within $(v_{\min}, v_{\max}]$ \\
		\bottomrule
	\end{tabular}
\end{table}


For ease of notation, we denote
\begin{align}
    \bbE[M(v)] := \bbE[M(v, \{(z_j, t_j)\}_{j=1}^n)]
\end{align}
The expectation is over the attacks, i.e. random $(z_j, t_j)$. All experiments are for uniformly-random attack times on a circular ($1$-dimensional) perimeter of circumference $1$.

Our simulations suggest a strong relationship between the speed $v$ of the homogeneous defenders and the expected number $\bbE[M(v)]$ of defenders required to thwart all attacks: plotting these in a log-log plot reveals an almost linear relationship between $\log v$ and $\log \bbE[M(v)-1]$, as shown in \Cref{fig:speed-vs-num}.\footnote{We subtract $1$ because $1$ defender is always required, so this measures the number of additional defenders above the minimum.}

\begin{figure}[h!]
     \centering
     \begin{subfigure}[b]{0.45\textwidth}
         \centering
         \includegraphics[width=\textwidth]{./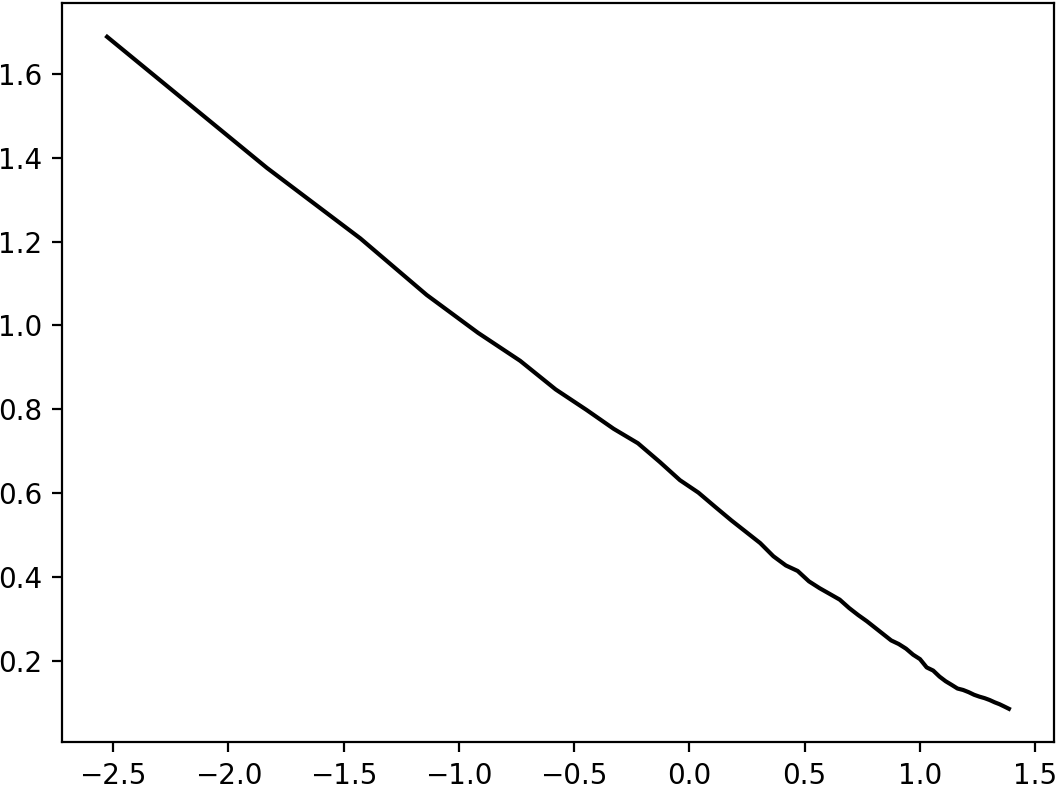}
         \caption{$t_{\max} = 15, n = 25, T = 1000$}
         \label{fig:speed-vs-num-d01_h15_n30_tr100}
     \end{subfigure}
     \hfill
     \begin{subfigure}[b]{0.45\textwidth}
         \centering
         \includegraphics[width=\textwidth]{./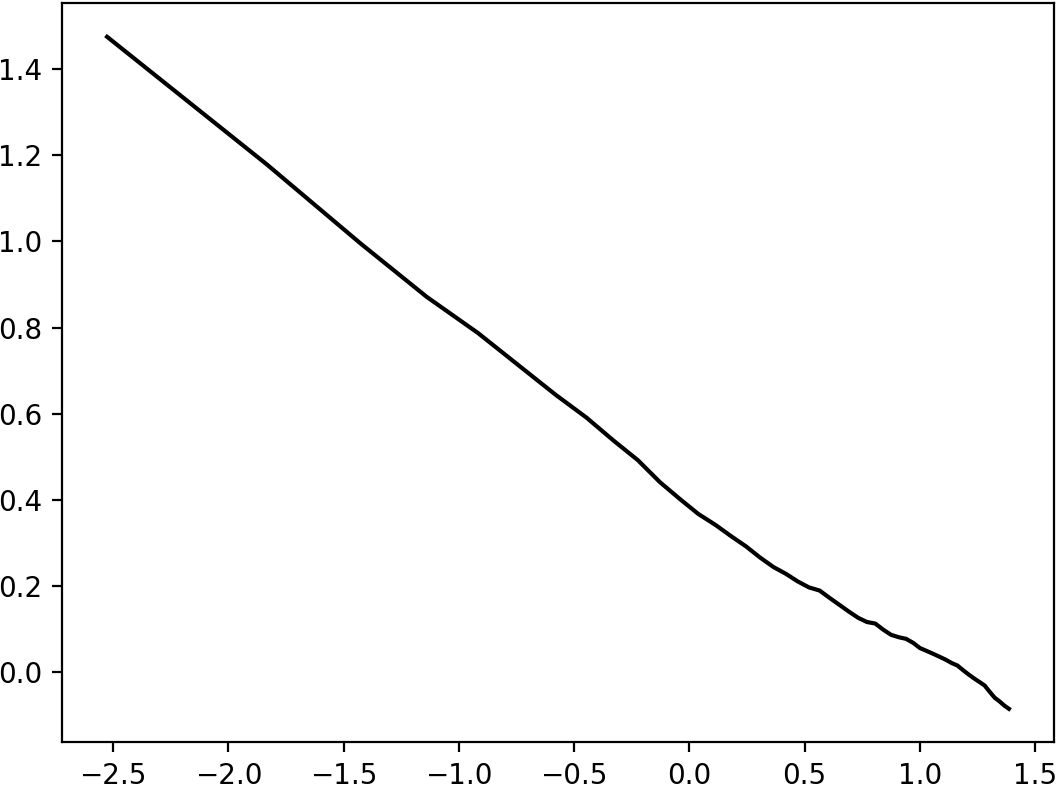}
         \caption{$t_{\max} = 25, n = 25, T = 1000$}
         \label{fig:speed-vs-num-d01_h25_n30_tr100.png}
     \end{subfigure}
        \caption{Log-log plots of $v$ versus (empirical) $\bbE[M(v)-1]$, plotted for a range of $50$ speeds $v$ from $0$ (not inclusive) to $4$ (inclusive), i.e. $v = 0.08, 0.16, \dots, 4.00$, for two different values of $t_{\max}$. Each plot done for $T = 1000$ trials of $n = 25$ random attacks (uniformly-random attack times).}
        \label{fig:speed-vs-num}
\end{figure}
        
This suggests that given a particular distribution of attacks, there is some value $\alpha > 0$ such that
\begin{align}
    \bbE[M(cv)-1] \approx c^{\alpha} \, \bbE[M(v)-1]
\end{align}
The slope of the line\footnote{Since it is not a 100\% perfect line (only extremely close) the slope is estimated via linear regression.} in the log-log plots in \Cref{fig:speed-vs-num} is $-\alpha$. What $\alpha$ measures, roughly, is the balance of the speed-number tradeoff for the defenders: if we double the speed of the defenders, how many fewer defenders (on average) will we need to stop all the attacks? The nearly linear nature of \Cref{fig:speed-vs-num} indicate that, for a given (uniform in time and location) distribution of attacks, this tradeoff is roughly constant over a wide range of speeds. When $\alpha$ is small, it indicates that increasing the number of defenders is comparatively more important than increasing the speed (by the same proportion); in particular, if $\alpha = 1$ then doubling the speed and doubling the number are equivalent.

We then examine what this $\alpha$ is given different distributions of attacks: for a given value of $t_{\max}$, we vary the number $n$ of attacks and see how $\alpha$ changes. The results are shown in \Cref{fig:alpha-params}, with the $y$-axis being the slope of the log-log plot of the given parameters ($-\alpha$, as mentioned above). Note that $\alpha < 1$ in all measured cases, meaning that increasing the number of defenders by some proportion $c$ is always more effective than increasing the speed of the defenders by a factor of $c$. When there are few attacks ($n = 5$), $\alpha \approx 0.8$, but as $n$ increases it drops to about $0.4$ before leveling out, indicating a substantial decrease in the efficacy of increasing speed as compared to increasing the number of defenders. This general pattern holds true for both plotted cases ($t_{\max} = 15$ and $t_{\max} = 25$) but the drop-off is sharper when $t_{\max} = 15$.

\begin{figure}[h!]
     \centering
     \begin{subfigure}[b]{0.48\textwidth}
         \centering
         \includegraphics[width=\textwidth]{./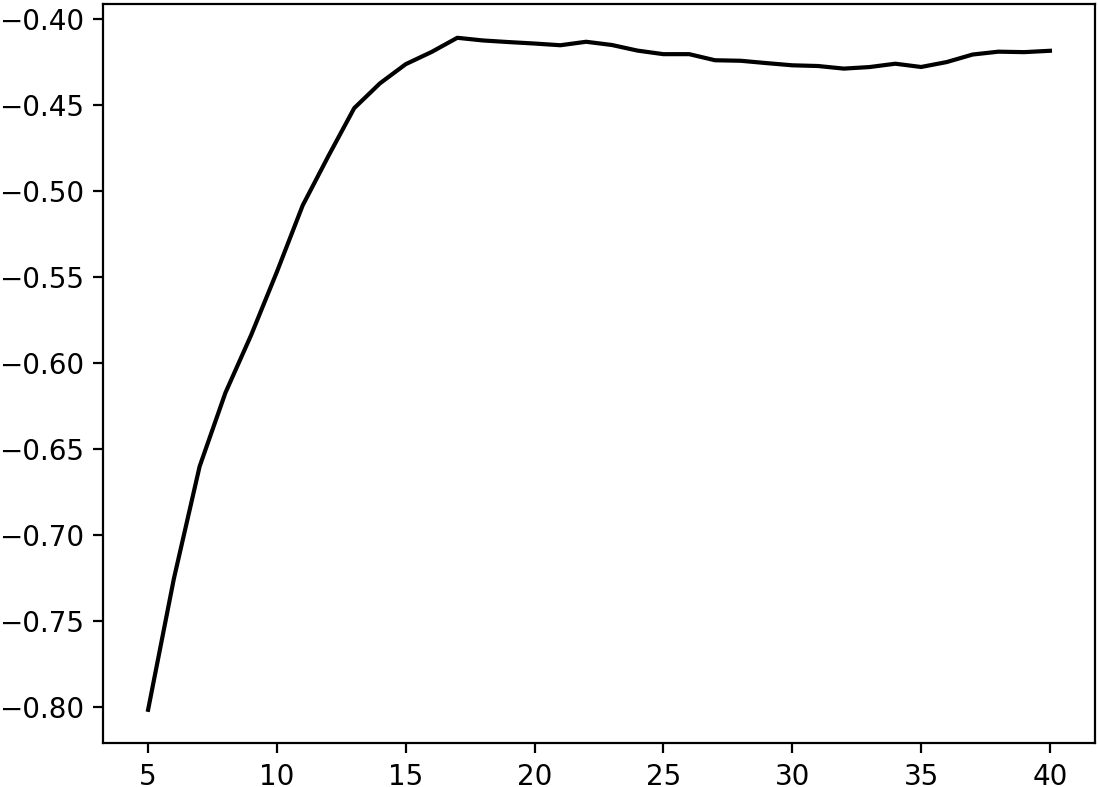}
         \caption{$t_{\max} = 15$}
         \label{fig:alpha-params-d01_h15_s4-50_tr1000.png}
     \end{subfigure}
     \hfill
     \begin{subfigure}[b]{0.48\textwidth}
         \centering
         \includegraphics[width=\textwidth]{./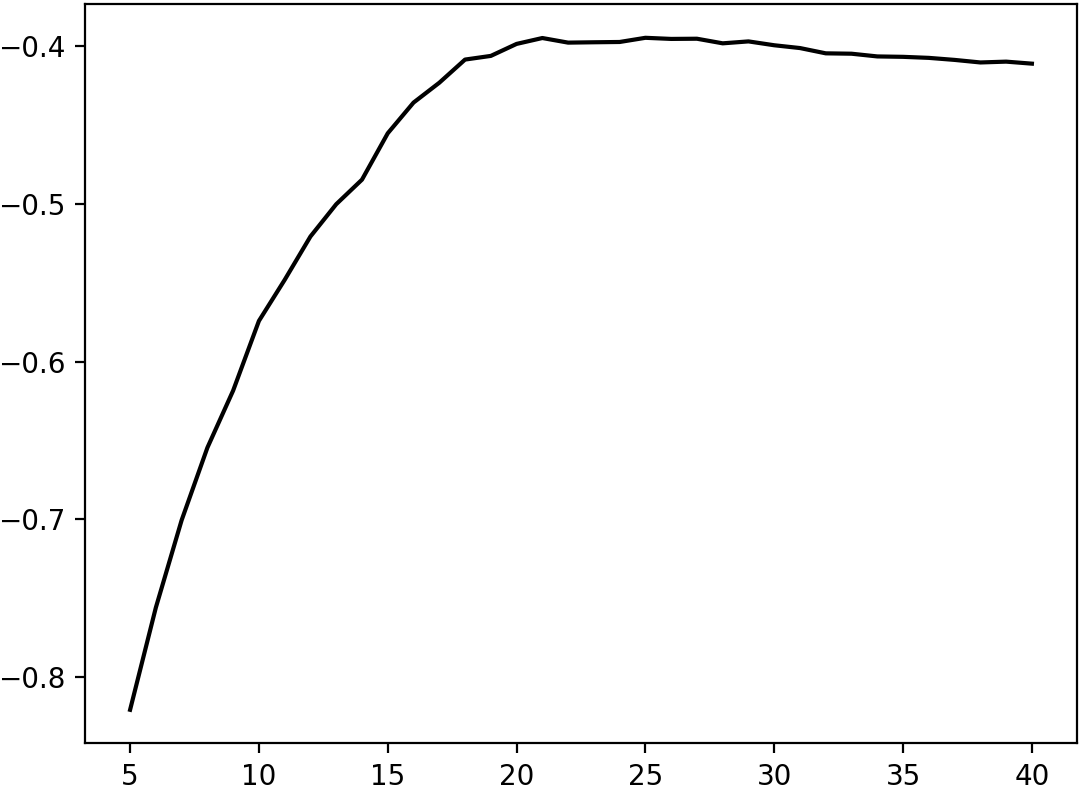}
         \caption{$t_{\max} = 25$}
         \label{fig:alpha-params-d01_h25_s4-50_tr1000.png}
     \end{subfigure}
        \caption{Relationship of $-\alpha$ ($y$-axis) to number of attacks $n$ ($x$-axis) from $n = 5$ to $n = 40$, uniform attack times, for two different values of $t_{\max}$. For each value of $n$ and each of $T = 1000$ random attack sequences, the value $M(v)$ was computed for $50$ linearly-spaced values of $v$ from $0$ to $4$ (not inclusive of $0$), i.e. $v = 0.08, 0.16, \dots, 4.00$, and the best-fit slope of the log-log plot extracted.}
        \label{fig:alpha-params}
\end{figure}

\subsection{The Monotonicity Acceleration}

We now describe in greater detail how we evaluate $\opt(\bv)$ in an unusual order in order to take advantage of the monotonicity of the function. While we have not made any rigorous attempt to optimize the ordering, the order we use was effective enough to reduce the computation of the simulations described in the main work (see Figures 3 and 5, left-hand column for uniformly-random attack times and right-hand for fixed attack times) by an average of between $93\%$ and $99\%$ depending on the parameters; the savings are much more pronounced for the fixed attack times case, as the $\opt$ function is generally flatter (and hence produces more cases where the upper bound matches the lower bound and the DP computation can be skipped).

First, it is noted that permuting $\bv$ does not affect $\opt(\bv)$; hence whenever $\opt(v_1, v_2)$ is computed, we get the same value for $\opt(v_2, v_1)$ (recall that in these simulations we drop the WLOG assumption that $v_1 \geq v_2$). For simplicity we will talk about the \emph{indices} of the speed values we compute for: the goal is to compute $\opt(\bv)$ for all $\bv$ with entries from a range of $g$ values, which we will denote as a vector $\bw = (w_1, \dots, w_g)$ where $w_1 < w_2 < \dots < w_g$ without loss of generality. Then we define a function $f: [g]^m \to \bbR$ where
\begin{align}
    f(j_1, \dots, j_m) = \opt(w_{j_1}, \dots, w_{j_m})
\end{align}
This just means that the $i$th defender has speed $v_i = w_{j_i}$. Since $\bw$ is monotonically increasing (in its entries) and $\opt$ is monotonically decreasing, $f$ is monotonically decreasing as well. For this section, as in the simulations, $m = 2$ so we can visualize $[g]^m$ as a grid (as in Figure 5 in the main paper).

The basic idea is to compute in order of decreasing powers of $2$, which we refer to as \emph{levels}: for instance, when $g = 256$ we compute first for multiples of $256$, then multiples of $128$, then $64$ and so forth. In this way, at each level we can use the results from the level above to check the monotonicity condition and potentially save significant computation.

The algorithm then does the following:
\begin{enumerate}
    \item Compute $f(j_1, j_2)$ for the four corners of the grid.
    \item Compute $f(j_1, j_2)$ on the edges and main diagonal of the grid, in decreasing powers of $2$. This establishes a border so that every subsequent step has an upper bound and a lower bound to compare.
    \item Compute $f(j_1, j_2)$ by decreasing levels (at each level, doing it in lexicographic increasing order as normal, skipping any entries already computed previously). 
\end{enumerate}
We also for simplicity only considered the upper and lower bounds generated by previous entries on the same row or column of the grid (two upper bounds, one for the row and one for the column, to two lower bounds, also one for the row and one for the column).

%


\end{document}